\providecommand{\tabularnewline}{\\}
\theoremstyle{plain}
\newtheorem{thm}{\protect\theoremname}
\theoremstyle{plain}
\newtheorem{cor}[thm]{\protect\corollaryname}
\date{\today}
\providecommand{\corollaryname}{Corollary}
\providecommand{\theoremname}{Theorem}
\begin{document}
\title{Minimum-Time Quantum Control and the Quantum Brachistochrone Equation}
\author{Jing Yang \href{https://orcid.org/0000-0002-3588-0832}{\includegraphics[scale=0.05]{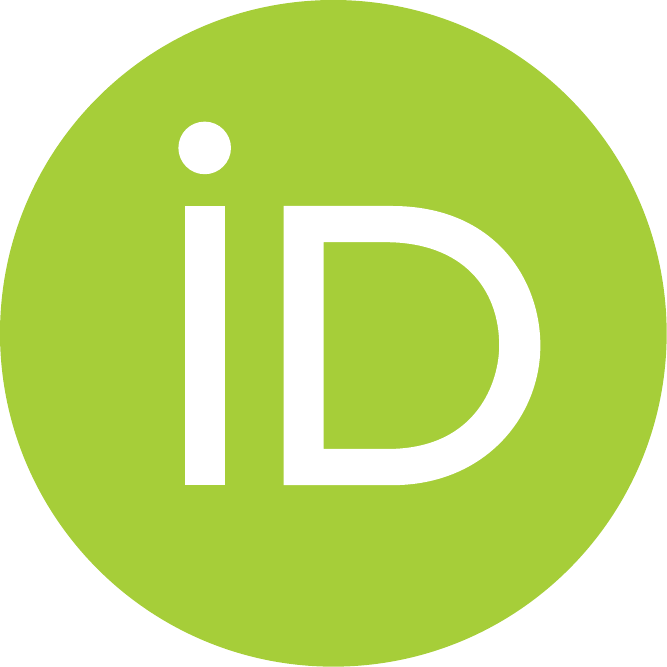}}}
\email{jing.yang@uni.lu}

\address{Department of Physics and Materials Science, University of Luxembourg,
L-1511 Luxembourg, Luxembourg}
\author{Adolfo del Campo\href{https://orcid.org/0000-0003-2219-2851}{\includegraphics[scale=0.05]{orcidid.pdf}}}
\email{adolfo.delcampo@uni.lu}

\address{Department of Physics and Materials Science, University of Luxembourg,
L-1511 Luxembourg, Luxembourg}
\address{Donostia International Physics Center, E-20018 San Sebasti\'an, Spain}
\begin{abstract}
Minimum-time quantum control protocols can be obtained from the quantum
brachistochrone formalism {[}Carlini, Hosoya, Koike, and Okudaira,
Phys. Rev. Lett. 96, 06053, (2006){]}.
We point out that the original treatment implicitly applied the variational
calculus with fixed boundary conditions. We argue that the genuine
quantum brachistochrone problem involves a variational
problem with a movable endpoint, contrary to the classical brachistochrone
problem. This formulation not only simplifies the derivation of the quantum
brachistochrone equation but introduces an additional constraint at the endpoint due to
the boundary effect. We  present the general
solution to the full quantum brachistochrone equation and discuss its main features. Using it, we prove that the speed of evolution under constraints
is reduced with respect to the unrestricted case. In addition, we
find that solving the quantum brachistochrone equation is closely connected
to solving the dynamics of the Lagrange multipliers, which is in
general governed by nonlinear differential equations. Their numerical 
integration allows  generating time-extremal
trajectories. Furthermore, when the restricted operators form a closed
subalgebra, the Lagrange multipliers become constant and the optimal
Hamiltonian takes a  concise form. The new class of analytically
solvable models for the quantum brachistochrone problem opens up the
possibility of applying it to many-body quantum systems,
exploring notions related to geometry such as quantum speed limits,  and advancing significantly
the quantum state and gate preparation for quantum information processing.
\end{abstract}
\maketitle

\section{Introduction}

The ability to control quantum systems lies at the heart of various
quantum technologies, such as quantum computation~\citep{nielsen2010quantum,nielsen2006quantum,dowling2008thegeometry},
quantum state preparation~\citep{carlini2006timeoptimal,carlini2007timeoptimal,carlini2017brachistochrone,girolami2019howdifficult,rahmani2011optimal},
quantum metrology~\citep{yuan2015optimal,pang2017optimal,yang2017quantum,yang2021variational,choi2020robustdynamic},
shortcut to adiabaticity~\citep{demirplak2003adiabatic,demirplak2005assisted,demirplak2008onthe,berry2009transitionless},
measurement-based state stabilization~\citep{mohseninia2020alwayson}, and
dynamical decoupling~\citep{viola1999dynamical,choi2020robustdynamic},
among others. 

One protocol that stands out among the different approaches to quantum control is
based on the quantum brachistochrone (QB) and was initiated by Carlini, Hosoya,
Koike, and Okudaira (CHKO)~\citep{carlini2006timeoptimal,carlini2007timeoptimal}
more than a decade ago. Motivated by the classical brachistochrone
problem, its quantum counterpart aims at finding the evolution which
takes the minimum time between two given quantum states or quantum
gates under given resources, such as a fixed norm of the Hamiltonian
and a limited  set of available Hamiltonian controls. 
The QB program by CHKO results in a differential equation with boundary conditions,
which we shall refer to as the CHKO equation in what follows. It has
inspired a number of  theoretical works~\citep{rezakhani2009quantum,koike2010timecomplexity,wang2015quantum,bender2007fasterthan,gunther2008naimarkdilated,wakamura2020ageneral,allan2021timeoptimal}
and experiments~\citep{lam2021demonstration}. The QB problem has close connections to
other fundamental notions in nonequilibrium quantum physics, such as the quantum speed limit for driven systems \cite{anandan1990geometry,uhlmann1992anenergy} and
counterdiabatic driving \cite{demirplak2003adiabatic,demirplak2005assisted,berry2009transitionless}.

\begin{figure}[t]
\begin{centering}
\includegraphics[scale=0.6]{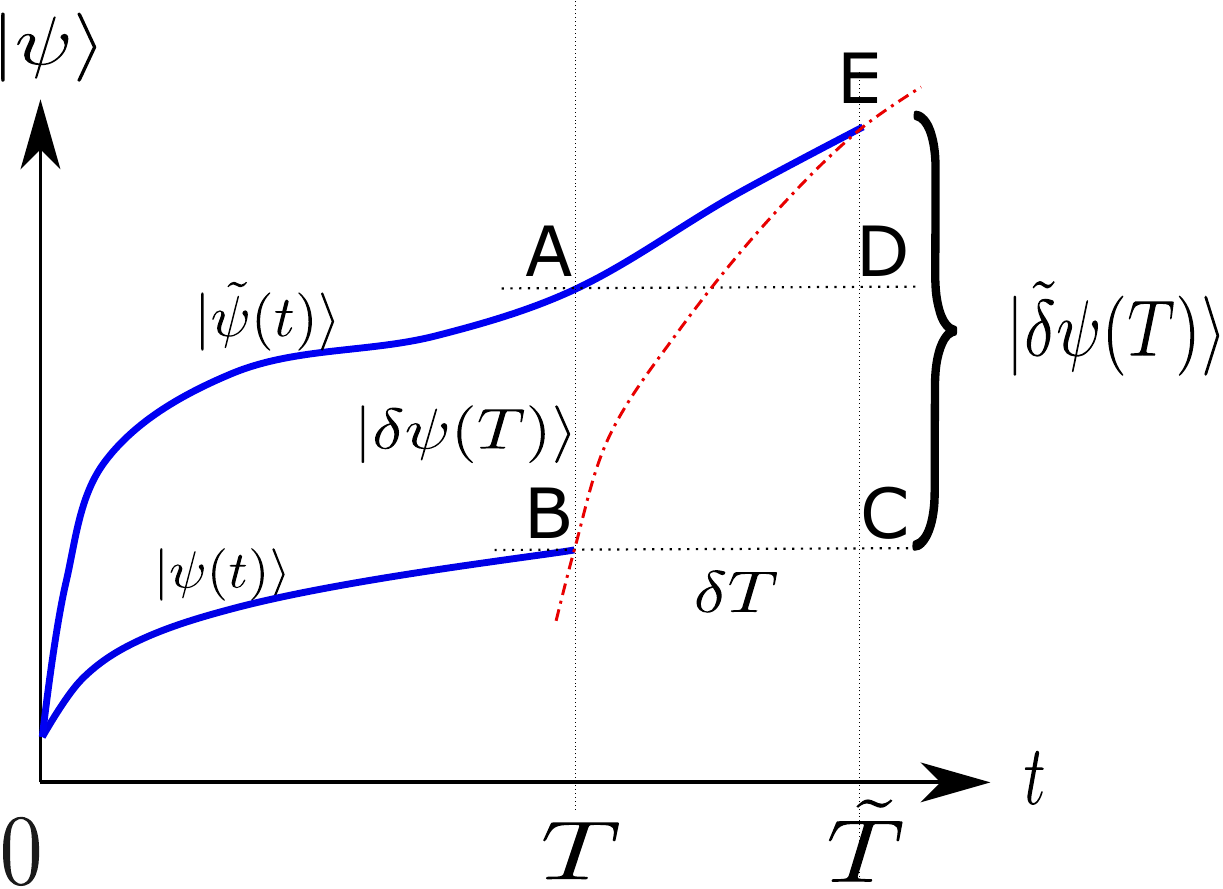}
\par\end{centering}
\caption{\label{fig:var-decomp}Decomposition of the variation of $\ket{\psi(t)}$
when the boundary at $t=T$ is moving. The geometric meaning of Eq.~(\ref{eq:varpsi})
is explained by the fact that $CE=CD+DE$.}
\end{figure}

At present, many aspects of the understanding and formulation of the QB remain to be elucidated. The QB problem was
proposed by CHKO exploiting the analogy with its classical counterpart.
The classical brachistochrone problem is solved by the variation calculus
with fixed boundary conditions \cite{gelfand2012calculus}. In this work, we reexamine the QB problem
formulated by CHKO and find the fixed boundary condition is also implicitly
assumed when performing the variational calculus. We
point out that, unlike its classical counterpart, the genuine QB problem
should be formulated as a variational problem with a movable endpoint, 
stemming from  the $U(1)$ transformation illustrated in Fig.~\ref{fig:var-decomp}.
As a result, the genuine QB problem should be solved via variational
calculus with movable boundaries. After accounting for the boundary
effect, the variational calculus with a movable boundary condition
still yields the CHKO equation as its governing equation, but with
an additional constraint at the final time,  unrecognized in preceding studies.
This constraint is necessary for clarifying the notion of locally
time-extremal trajectories. Indeed, trajectories that violate this constraint 
have been mistakenly classified as locally time-extremal in previous literature. 
Moreover, as a by-product of our formalism, the derivation of the CHKO equation for a given system
is simplified dramatically with respect to the
original approach~\citep{carlini2006timeoptimal,carlini2007timeoptimal}.

Developing efficient numerical algorithms for the
QB problem is notoriously difficult ~\citep{wang2015quantum,campaioli2019algorithm}.
Numerical calculations of the QB problem reported to date remain limited to systems
with a few qubits. Solving the QB for a many-body system, even numerically,
is a formidable task. The underlying reason for this difficulty
is the lack of an analytic understanding of the structure of the solutions
to the QB problem. Here, we analyze the general features of the full
QB equation, and find general analytical expressions for the optimal
Hamiltonian and unitary evolution operator, deriving the governing equations
of the full QB problem in terms of the dynamics of the Lagrangian
multipliers. We exactly pin down the challenge of solving the QB problem:
Determining the dynamics of the Lagrange multipliers, the dynamics of which
is governed by a set of nonlinear differential equations. 
Based on the analytical findings, we propose a recipe to generate
locally time-extremal trajectories numerically. Although our numerical
recipe cannot ascertain the optimality of the trajectories globally, it
at least makes the generation of time-extremal trajectories possible.
We believe the full QB problem may be solved numerically by combining
our algorithm here with some other searching algorithms that can select the global minimum-time trajectories among all the local
extremal ones.

It has been observed  in
the literature, though not yet rigorously proved, that increasing the number of constraints on the Hamiltonian,
as it often happens in the driving of many-body systems, slows down the preparation of a target state~\citep{bukov2019geometric,delcampo2021probing}.
Building on our analytical findings, we prove that the speed of evolution under
restricted controls in the Hamiltonian
can not exceed the speed of free evolution in the unrestricted case, in which  only the norm of
the Hamiltonian is bounded. Remarkably, we identify an important
class of analytically solvable examples of the QB problem: when the
restricted operators form a close Lie algebra, the Lagrange multipliers
become constant and both the optimal Hamiltonian and the evolution operator
take a  simple form. This opens up the possibility of solving
QB problem in many-body systems, which has manifold applications in
quantum science and technology. Our results are also of relevance to geometric approaches to quantum dynamics, with applications to quantum state and gate preparation, and quantum speed limits, for example.

We next review the basic results in Refs.~\citep{carlini2006timeoptimal,carlini2007timeoptimal}
by CHKO in Sec.~\ref{sec:The-CHKO-equation}. In Sec.~\ref{sec:Fixed-boundary-versus}, we carefully examine
that boundary condition in the variational calculus that CHKO employed
and found that as an analogy to the classical QB problem, CHKO assumed
the fixed boundary condition in the QB problem while the genuine QB
should be formulated as a variational problem with the endpoint movable.
We further show that one can derive the CHKO equation without the assumption
that the boundary conditions are fixed, significantly simplifying the calculation
in the original CHKO formalism. In Sec.~\ref{sec:The-full-QB},
we account for the effect of the movable endpoint in the CHKO action
and show that the moving boundary will introduce an additional constraint
at the final time. In Sec.~\ref{sec:governing-solution}, we derive
the optimal Hamiltonian and optimal unitary evolution for the QB problem
as well as the governing differential equations for the Lagrangian
multipliers. We prove that the speed of evolution is in general reduced
when constraints are introduced in Sec.~\ref{sec:speed} and discuss
a class of analytically solvable examples in Sec.~\ref{sec:examples}, before closing the manuscript with a summary of the main conclusions.

\section{\label{sec:The-CHKO-equation}The CHKO equation}

Throughout this work, we consider systems with a finite Hilbert space dimension $N$.
The QB problem involves finding the optimal Hamiltonian that generates a time evolution 
from the initial state $\ket{\psi_{i}}$ to
the final state $\ket{\psi_{f}}$ in the shortest possible time under given constraints. Before
formulating the problem, we first note the redundancy of the gauge
degree of freedom. If $H(t)$ is the optimal Hamiltonian, then shifting
$H(t)$ by any time-dependent scalar also yields the optimal Hamiltonian.
Fixing such a gauge degree of freedom yields the constraint 
\begin{equation}
\text{Tr}[H(t)]=0.\label{eq:traceless}
\end{equation}
Furthermore, if the norm of $H(t)$ is unbounded, one can always scale $H(t)$
so that the minimum time is zero~\citep{carlini2006timeoptimal}. This
observation motivates the norm constraint  $\text{Tr}[H^{2}(t)]\leq2\omega^{2}$.
As argued in~\citep{wang2015quantum}, when this inequality is not saturated, one can always rescale the Hamiltonian such that the trajectory
$\ket{\psi(t)}$ is unchanged and traversed in a shorter time. Therefore,
one should consider the following equality constraint,
\begin{equation}
f_{0}(H(t))=\frac{1}{2}\text{Tr}[H^{2}(t)]-\omega^{2}=0.\label{eq:norm-constraint}
\end{equation}
In addition, the controls available in a given system may be limited. This is typically the case in many-body quantum systems~\citep{sels2017minimizing,claeys2019floquetengineering,yang2021variational},
 where the  Hamiltonian $H(t)$ may lack  certain operators, such as those involving  long-range or  multiple-body interactions. This  motivates
the following constraint
\begin{equation}
f_{j}(H(t))=\text{Tr}[H(t)\mathcal{X}_{j}]=0,\,j\in[1,\,M],\label{eq:term-constraint}
\end{equation}
where $\mathcal{X}_{j}$'s are the traceless orthonormal and Hermitian
generators of the $su(N)$ Lie algebra, satisfying $\text{Tr}\left(\mathcal{X}_{i}\mathcal{X}_{j}\right)=N\delta_{ij}$.
Finally, the underlying equation of motion should be satisfied by the controlled dynamics. Under unitary evolution,  the trajectory $\ket{\psi(t)}$ is generated by the Hamiltonian $H(t)$
according to the Schr\"odinger equation 
\begin{equation}
\text{i}\ket{\dot{\psi}(t)}=H(t)\ket{\psi(t)}.\label{eq:SE}
\end{equation}
Given the above constraints, CHKO~\citep{carlini2006timeoptimal}
constructed the following action
\begin{equation}
S_{\text{CHKO}}(\ket{\psi},\,H,\,\ket{\chi},\,\lambda_{j})=\sum_{\alpha=\text{T},\,\text{C},\,\text{S}}\int_{0}^{T}L_{\text{\ensuremath{\alpha} }}dt,\label{eq:S-CHKO}
\end{equation}
involving  the time, constraint, and   Schr\"odinger Lagrangians 
defined as
\begin{align}
L_{\text{T}} & =\frac{\sqrt{g_{tt}}}{\Delta E(t)},\;L_{\text{C}}=\sum_{j}\lambda_{j}(t)f_{j}(H(t)),\label{eq:LT-def}\\
L_{\text{\ensuremath{\text{S}}}} & =\braket{\chi(t)\big|H(t)\big|\psi(t)}-\text{i}\braket{\chi(t)\big|\dot{\psi}(t)}+\text{h.c.},\label{eq:LS-def}
\end{align}
respectively. Here,  $g_{tt}\equiv\braket{\dot{\psi}(t)\big|\dot{\psi}(t)}-|\braket{\dot{\psi}(t)\big|\psi(t)}|^{2}$
is the Fubini-Study metric~\citep{braunstein1994statistical} and
$\Delta E(t)\equiv\sqrt{\text{Var}[H^{2}(t)]\big|_{\ket{\psi(t)}}}$
characterizes the speed of quantum evolution~\citep{anandan1990geometry}.
Minimization of the action $\delta S_{\text{CHKO}}=0$ yields the Euler-Lagrangian
equation
\begin{equation}
\sum_{\alpha}\frac{\partial L_{\alpha}}{\partial\bra{\psi(t)}}-\frac{d}{dt}\frac{\partial L_{\alpha}}{\partial\bra{\dot{\psi}(t)}}=0,\:\sum_{\alpha}\frac{\partial L_{\alpha}}{\partial H(t)}=0,\label{eq:EL-psi-Htilde}
\end{equation}
where $\alpha=\text{T},\,\text{C},\,\text{S}$. After performing algebraic
simplifications, CHKO arrived at the following equation
\begin{align}
\left\{ \dot{F}(t)+\text{i}[H(t),\,F(t)]\right\} \ket{\psi(t)} & =0,\label{eq:CHKO-psi}\\
\{F(t),\,\mathcal{P}(t)\} & =F(t),\label{eq:CHKO-psi-2nd}
\end{align}
where $\mathcal{P}(t)=\ket{\psi(t)}\bra{\psi(t)}$ and $F(t)=\sum_{j}\lambda_{j}(t)\partial f_{j}/\partial H(t)$.
Multiplying both sides of Eq.~(\ref{eq:CHKO-psi-2nd}) by $\mathcal{O}(t)$
from the left and taking the trace on both sides, we see that 
\begin{equation}
\langle\mathcal{O}(t)F(t)\rangle+\langle F(t)\mathcal{O}(t)\rangle=\text{Tr}[\mathcal{O}(t)F(t)],\label{eq:F-id}
\end{equation}
where $\langle\cdot\rangle$ denotes the average over the state $\ket{\psi(t)}$
throughout this work. In particular,
\begin{equation}
\langle F(t)\rangle=\text{Tr}[F(t)]=\lambda_{0}(t)\text{Tr}[H(t)]+\sum_{j\ge1}\lambda_{j}(t)\text{Tr}(\mathcal{X}_{j})=0.
\end{equation}
CHKO showed that it is sufficient to satisfy Eq.~(\ref{eq:CHKO-psi})
at all times if 
\begin{align}
\dot{F}(t)+\text{i}[H(t),\,F(t)] & =0,\label{eq:CHKO}\\
\{F(0),\,P(0)\} & =F(0),\label{eq:CHKO-initial}
\end{align}
which can be explicitly verified by noting that $F(t)=U(t)F(0)U^{\dagger}(t)$,
where $U(t)$ is the time-evolution operator generated by the Hamiltonian $H(t)$.
We refer to  Eqs.~(\ref{eq:CHKO})-(\ref{eq:CHKO-initial}) as the
CHKO equation in the folloowing. In fact, as we show in Appendix~\ref{sec:equivalence},
Eq.~(\ref{eq:CHKO-psi}) also implies Eq.~(\ref{eq:CHKO}) and therefore
they are equivalent.

\section{\label{sec:Fixed-boundary-versus}Fixed boundary versus movable boundary }

In the CHKO formalism~\citep{carlini2006timeoptimal}, the evolution
time, i.e., $\int_{0}^{T}L_{\text{T}}dt$ is optimized over all the
possible trajectories which traverse from an initial state $\ket{\psi_{i}}$
to a target state $\ket{\psi_{f}}$ under the constraints~(\ref{eq:norm-constraint})-(\ref{eq:SE}).
Note that in the constrained variational problem, before introducing
the Lagrange multipliers, the only independent function is $\ket{\psi(t)}$.

Here, we  point out that in the genuine QB problem, when
$\ket{\psi(t)}\to\ket{\tilde{\psi}(t)}=\ket{\psi(t)}+\ket{\delta\psi(t)}$,
there is also an infinitesimal change in the evolution time, which
must be accounted for in the integral upper limit of the CHKO action~(\ref{eq:S-CHKO}).
Taking into account the boundary effects requires the variational
calculus with movable boundaries, distinct from the one with fixed boundaries.
However, the Euler-Lagrangian equation~(\ref{eq:LT-def})-(\ref{eq:LS-def})
from which the CHKO equation is obtained does not contain information
on whether the boundary is moving or not. The fixed boundary
condition is implicitly assumed in the CHKO formalism. 

To see this, let us unveil the original optimization problem corresponding
to the CHKO action~(\ref{eq:S-CHKO}) before introducing the Lagrangian
multipliers. It consists of finding the extremum of 
$\int_{0}^{T}L_{\text{T}}dt$,
under the constraints that $\ket{\psi(t)}$ and $H(t)$ satisfy
Eqs.~(\ref{eq:norm-constraint})-(\ref{eq:SE}), where $T$
is kept as some constant. Note that under the constraint of the Schr\"odinger
equation~(\ref{eq:SE}), Anandan and Aharonov~\citep{anandan1990geometry}
obtained that $\sqrt{g_{tt}}=\Delta E(t)$ such that $\int_{0}^{T}L_{\text{T}}dt=T$.
It is worth noting that although  the original CHKO formalism requires 
the calculation of $\partial L_{\text{T}}/\partial\bra{\psi(t)}$,
$\partial L_{\text{T}}/\partial\bra{\dot{\psi}(t)}$ and $\partial L_{\text{T}}/\partial H(t)$ [see
e.g. Eq.~(2) of Ref.\citep{carlini2006timeoptimal}], when the endpoint $T$ is kept fixed, $\delta S_{\text{T}}$
actually vanishes regardless of the variations of $\ket{\delta\psi(t)}$
and $\ket{\delta H(t)}$,
\begin{equation}
\delta S_{\text{T}}=\delta\int_{0}^{T}L_{\text{T}}dt=\delta T=0.\label{eq:delta-ST-vanishing}
\end{equation}

Equation~(\ref{eq:delta-ST-vanishing}) not only allows us to
reproduce the CHKO equation, which validates our observation that
the endpoint is implicitly assumed in the CHKO formalism, but also
simplifies the derivation dramatically without performing the tedious
variational calculus of $\partial L_{\text{T}}/\partial\bra{\psi(t)}$,
$\partial L_{\text{T}}/\partial\bra{\dot{\psi}(t)}$ and $\partial L_{\text{T}}/\partial H(t)$.
In Eq.~(\ref{eq:EL-psi-Htilde}), one only needs to consider the
contribution from $\alpha=\text{S}$ and $\alpha=\text{C}$, ignoring
the contribution from $\alpha=\text{T}$. This results in the following
Euler-Lagrange equation 
\begin{equation}
H(t)\ket{\chi(t)}=\text{i}\ket{\dot{\chi}(t)},\label{eq:EL-sim}
\end{equation}
\begin{equation}
\,F(t)+(\ket{\psi(t)}\bra{\chi(t)}+\text{h.c.})=0.\label{eq:EL-sim-2nd}
\end{equation}
 Taking trace on both sides of Eq.~(\ref{eq:EL-sim-2nd}) yields
\begin{equation}
\braket{\psi(t)\big|\chi(t)}+\braket{\chi(t)\big|\psi(t)}=0.\label{eq:psichi-sim}
\end{equation}
Applying $\ket{\psi(t)}$ to Eq.~(\ref{eq:EL-sim-2nd}), one obtains
\begin{equation}
\ket{\chi(t)}=-\braket{\chi(t)\big|\psi(t)}\ket{\psi(t)}-F(t)\ket{\psi(t)}.\label{eq:chi-sim}
\end{equation}
Substituting Eq. (\ref{eq:chi-sim}) back to Eq.~(\ref{eq:EL-sim-2nd}),
we then find 
\begin{equation}
F(t)-\mathcal{P}(t)\braket{\chi(t)\big|\psi(t)}-\mathcal{P}(t)\braket{\psi(t)\big|\chi(t)}-\mathcal{P}(t)F(t)-F(t)\mathcal{P}(t)=0.
\end{equation}
According to Eq.~(\ref{eq:psichi-sim}), we obtain Eq.~(\ref{eq:CHKO-psi-2nd}). 

Next, we take time derivative on both sides of Eq.~(\ref{eq:EL-sim-2nd})
and using Eq.~(\ref{eq:EL-sim}), we find 
\begin{equation}
\dot{F}(t)+(\ket{\dot{\psi}(t)}\bra{\chi(t)}+\ket{\psi(t)}\bra{\dot{\chi}(t)}+\text{h.c.})=0.\label{eq:F-dot}
\end{equation}
Using the Schr\"odinger equation for $\ket{\psi(t)}$ and Eq.~(\ref{eq:EL-sim}),
yields the relation
\begin{equation}
\ket{\dot{\psi}(t)}\bra{\chi(t)}+\ket{\psi(t)}\bra{\dot{\chi}(t)}=-\text{i}[H(t),\,\ket{\psi(t)}\bra{\chi(t)}],
\end{equation}
whence it follows that
\begin{align}
 & \ket{\dot{\psi}(t)}\bra{\chi(t)}+\ket{\psi(t)}\bra{\dot{\chi}(t)}+\text{h.c.}\nonumber \\
= & -\text{i}[H(t),\,\ket{\psi(t)}\bra{\chi(t)}]+\text{i}[\ket{\chi(t)}\bra{\psi(t)},\,\tilde{H}(t)]\nonumber \\
= & -\text{i}[H(t),\,\ket{\psi(t)}\bra{\chi(t)}+\ket{\chi(t)}\bra{\psi(t)}]\nonumber \\
= & \text{i}[H(t),\,F(t)],
\end{align}
where in the last step we have used Eq.~(\ref{eq:EL-sim-2nd}). Thus,
combining the above equation with Eq.~(\ref{eq:F-dot}), we obtain
Eq.~(\ref{eq:CHKO}). The derivation of the CHKO presented here offers a dramatic simplification of the original one, 
once the proper interpretation
of the endpoints in the variational calculus is recognized. 

In Ref.~\citep{carlini2007timeoptimal}, CHKO also derived an equation
for quantum gates. The result is agrees with Eq.~(\ref{eq:CHKO}),
but without the initial condition~(\ref{eq:CHKO-initial}) and with
a different boundary condition $U(T)=\mathcal{T}\exp\left[-\text{i}\int_{0}^{T}H(\tau)d\tau\right]\sim U_{f}$,
up to some $U(1)$ phase. Following a similar procedure to the one presented here, one can also rederive
 the CHKO result for quantum gates with minimum efforts, as we show in Appendix~\ref{sec:Rederiving-CHKO-gate}.

\section{\label{sec:The-full-QB}The full quantum brachistochrone equation}

In this section, we show that in addition to the CHKO equation, the
full QB equation involves an additional constraint
at the final time $T$ stemming from the effect of the movable boundaries.
Our goal is to
\begin{equation}
\text{optimize}\,\int_{0}^{T}dt,
\end{equation}
provided that $\ket{\psi(t)}$ and $H(t)$ satisfy
the Schr\"odinger equation~(\ref{eq:SE}), and that $H(t)$ fulfills the
constraints~(\ref{eq:norm-constraint})-(\ref{eq:term-constraint})
and the boundary condition
\begin{align}
\ket{\psi(0)} & =\ket{\psi_{i}},\label{eq:initial-BC}\\
\ket{\psi(T)} & \sim\ket{\psi_{f}}.\label{eq:final-BC}
\end{align}
However, at variance with the situation in ordinary variational calculus, the boundary condition
at $T$ is not fixed. 

For a general movable boundary condition, $\ket{\psi(t)}\to\ket{\tilde{\psi}(t)}=\ket{\psi(t)}+\ket{\delta\psi(t)}$
at $T\to\tilde{T}=T+\delta T$, as shown in Fig.~\ref{fig:var-decomp},
one  readily finds that 
\begin{align}
\ket{\tilde{\delta}\psi(T)}\equiv & \ket{\tilde{\psi}(T+\delta T)}-\ket{\psi(T)}\nonumber \\
= & \ket{\tilde{\psi}(T+\delta T)}-\ket{\tilde{\psi}(T)}+\ket{\tilde{\psi}(T)}-\ket{\psi(T)}\nonumber \\
= & \ket{\dot{\tilde{\psi}}(T)}\delta T+\ket{\delta\psi(T)}\nonumber \\
= & \ket{\dot{\psi}(T)}\delta T+\ket{\delta\psi(T)}.\label{eq:varpsi}
\end{align}
The geometric meaning of Eq.~(\ref{eq:varpsi}) is ilustrated in Fig.~\ref{fig:var-decomp}.
In general, the boundary condition dictates how $\ket{\tilde{\psi}(\tilde{T})}$
changes and therefore determines $\ket{\tilde{\delta}\psi(T)}$. In
the current context, the final boundary condition~(\ref{eq:final-BC})
dictates that 
\[
\ket{\tilde{\psi}(\tilde{T})}=e^{\text{i}\delta\theta(T)}\ket{\psi(T)},
\]
where $\delta\theta(T)$ should be an arbitrary variation. We emphasize
that the value of the variational trajectory $\ket{\tilde{\psi}(t)}$
at the new final time is proportional to the old trajectory $\ket{\psi(t)}$
at the old final time with the proportionality constant being a phase
close to the identity. Therefore, we conclude that
\begin{eqnarray}
\ket{\delta\psi(T)} & = & -\ket{\dot{\psi}(T)}\delta T+\text{i}\delta\theta(T)\ket{\psi(T)}.\label{eq:delta-psiT}
\end{eqnarray}
Finally, we remark that that $\ket{\delta\psi(t)}$  introduces
a change of the total time $\delta T$ at the end point. Without introducing
the Lagrange multipliers, $\ket{\delta H(t)}$ depends on $\ket{\delta\psi(t)}$.
In this case,  a change in the Hamiltonian  leads to a change of the
evolution time $T$. However, after introducing the  Lagrange multipliers,
$H(t)$ and $\ket{\psi(t)}$ are independent functions. Only the variation
of $\ket{\psi(t)}$  leads to a change of the total evolution time. 

Given these considerations, the variation of the CHKO action 
reads \begin{widetext}
\begin{eqnarray}
\delta S_{\text{CHKO}} & =&\delta T+\int_{0}^{T+\delta T}L_{\text{S}}\left(\ket{\psi(t)}+\ket{\delta\psi(t)}\right)dt-\int_{0}^{T}L_{\text{S}}\left(\ket{\psi(t)}+\ket{\delta\psi(t)}\right)dt\nonumber \\
 & & +\int_{0}^{T}L_{\text{S}}\left(\ket{\psi(t)}+\ket{\delta\psi(t)}\right)dt-\int_{0}^{T}L_{\text{S}}\left(\ket{\psi(t)}\right)dt\nonumber \\
 & =&\delta T+L_{\text{S}}\left(\ket{\psi(T)}\right)\delta T+\int_{0}^{T}\langle\delta\psi(t)\bigg|\left(\frac{\partial L_{\text{S}}}{\partial\bra{\psi(t)}}-\frac{d}{dt}\frac{\partial L_{\text{S}}}{\partial\bra{\dot{\psi}(t)}}\right)\bigg\rangle dt+\langle\delta\psi(t)\bigg|\frac{\partial L_{\text{S}}}{\partial\bra{\dot{\psi}(t)}}\bigg\rangle\bigg|_{t=0}^{t=T}.
\end{eqnarray}
\end{widetext}
Substituting Eq.~(\ref{eq:delta-psiT}) and noting
that 
\begin{align}
\ket{\delta\psi(0)} & =0,\\
L_{\text{S}}\left(\ket{\psi(T)},\,\ket{\dot{\psi}(T)}\right) & =0,\\
\frac{\partial L_{\text{S}}}{\partial\bra{\psi(t)}}-\frac{d}{dt}\frac{\partial L_{\text{S}}}{\partial\bra{\dot{\psi}(t)}} & =H(t)\ket{\chi(t)}-\text{i}\ket{\dot{\chi}(t)},
\end{align}
we arrive at 
\begin{align}
\delta S_{\text{CHKO}} & =\int_{0}^{T}\langle\delta\psi(t)\bigg|\left(\frac{\partial L_{\text{S}}}{\partial\bra{\psi(t)}}-\frac{d}{dt}\frac{\partial L_{\text{S}}}{\partial\bra{\dot{\psi}(t)}}\right)\bigg\rangle dt\nonumber \\
 & +\left(1-\text{i}\braket{\dot{\psi}(T)\big|\chi(T)}\right)\delta T+\text{i}\delta\theta(T)\braket{\psi(T)\big|\chi(T)}.
\end{align}
Thus $\delta S_{\text{CHKO}}=0$ yields not only Eq.~(\ref{eq:EL-sim}),
but also two additional equations, 
\begin{align}
\text{i}\braket{\dot{\psi}(T)\big|\chi(T)} & =1,\label{eq:state-add1}\\
\braket{\psi(T)\big|\chi(T)} & =0.\label{eq:state-add2}
\end{align}
Using the Schr\"odinger equation into Eq.~(\ref{eq:state-add1})
yields 
\begin{equation}
\braket{\psi(T)\big|H(T)\big|\chi(T)}+1=0,
\end{equation}
which together with Eq.~(\ref{eq:chi-sim}) and
Eq.~(\ref{eq:state-add2}) leads to
\begin{equation}
\braket{\psi_{f}\big|H(T)F(T)\big|\psi_{f}}=1.\label{eq:BC}
\end{equation}
Equation~(\ref{eq:BC}) is one of our central results and constitutes an additional
non-trivial constraint due to the moving boundary, unrecognized
in the previous literature. An analogous constraint also exists in the context of generating a target quantum gate, see Appendix~\ref{sec:FullQB-gate}. 

The reader may wonder what happens when both the initial and  the final boundary
conditions are movable, i.e., $\ket{\psi_{i}(0)}=e^{\text{i}\theta_{i}}\ket{\psi_{i}}$
and $\ket{\psi(T)}=e^{\text{i}\theta_{f}}\ket{\psi_{f}}$. Obviously,
trajectories satisfying these boundary conditions can be identified
with trajectories satisfying $\ket{\psi_{i}(0)}=\ket{\psi_{i}}$ and
$\ket{\psi(T)}=e^{\text{i}(\theta_{f}-\theta_{i})}\ket{\psi_{f}}$,
with the Hamiltonian and time-evolution operator remaining unchanged. This is  tantamount to imposing 
the boundary conditions in Eqs. (\ref{eq:initial-BC})-(\ref{eq:final-BC}).
Therefore, it is sufficient to consider the case where the initial
state is fixed while the final boundary condition is movable, according
to the $U(1)$ gauge transformation.

\section{\label{sec:governing-solution}The governing equations for the full
Quantum brachistochrone equation}

Having found the additional constraint due to the effect of a moving
boundary, we now discuss how to solve the complete set of QB equations,
including the CHKO equation~(\ref{eq:CHKO})-(\ref{eq:CHKO-initial}),
the constraints~(\ref{eq:traceless})-(\ref{eq:term-constraint}),
the boundary conditions in Eq.~(\ref{eq:final-BC}), and Eq.~(\ref{eq:BC}).
We shall divide the solution process into two stages: 

(i) In the first stage, we consider $\lambda_{j}(t)$$(j\ge0)$ and
$H(t)$ as unknown functions, treating $\lambda_{j}(0)$ and $H(0)$
as  fixed  and solve the constraints~(\ref{eq:traceless})-(\ref{eq:term-constraint})
together with Eq.~(\ref{eq:CHKO}). We derive the expression  of $H(t)$ and $U(t)$
in terms of the Lagrange multipliers $\lambda_{j}(t)(j\ge0)$, whose
dynamics is given by a nonlinear differential equation.

(ii) In the next stage, we consider $\lambda_{j}(0)$ $(j\ge0)$, $H(0)$
and $T$ as unknowns and solve Eq.~(\ref{eq:CHKO-initial}) subject to
the boundary conditions in Eq.~(\ref{eq:final-BC}) and Eq.~(\ref{eq:BC}).

To gain some qualitative understanding of the solution, let us count
the number of constraints and the number of unknowns in the CHKO equation
at both stages. In the first stage, the number of independent equations
is $N^{2}+M+1$, which is the same as the number of unknowns. Since
the norm constraint~(\ref{eq:norm-constraint}) is nonlinear, the
number of solutions, provided that they exist, should be multiple in general. In
fact, as one can see from Eqs.~(\ref{eq:lam-0})-(\ref{eq:eta-exp}),
the differential equations involving $\lambda_{j}(t)$ are highly
nonlinear. In the second stage, the number of independent equations
is $N^{2}+M+4$, while the number of unknowns is $N^{2}+M+2$. Therefore,
we see that the presence of Eq.~(\ref{eq:BC}) imposes a compatibility
condition among the coefficients $\lambda_{j}(0)$ and $H(0)$. Violation of 
the compatibility condition could mean that the extremal-time trajectory does not exist, which can be expected  when the Hamiltonian is highly
restricted. For example, if the Hamiltonian is local, it may not  be able to generate a trajectory between
an initially separable state and a final entangled state. 

\begin{table*}
\begin{centering}
\begin{tabular}{ccccccc}
\toprule 
 & \multicolumn{2}{c}{Stage (i)} & \multicolumn{4}{c}{Second (ii)}\tabularnewline
\midrule
\# of unknowns & \multicolumn{2}{c}{$\lambda_{j}(t)(j\ge1)$: \#=$M+1$, $H(t)$:\#$=N^{2}$} & \multicolumn{4}{c}{$\lambda_{j}(0)(j\ge0)$ \#$=M+1$, $H(0)$:\#$=N^{2}$, $T$: \#$=1$}\tabularnewline
\midrule 
\multirow{2}{*}{\# of independent equations} & Eqs.~(\ref{eq:traceless}-\ref{eq:term-constraint}) & Eq.~(\ref{eq:CHKO-initial}) & Eqs.~(\ref{eq:traceless}-\ref{eq:term-constraint}) at $t=0$ & Eq.~(\ref{eq:CHKO-initial}) & Eq.~(\ref{eq:final-BC}) & Eq.~(\ref{eq:BC}) \tabularnewline
\cmidrule{2-7} \cmidrule{3-7} \cmidrule{4-7} \cmidrule{5-7} \cmidrule{6-7} \cmidrule{7-7} 
 & $M+2$ & $N^{2}-1$ & $M+2$ & $(N-1)^{2}$ & $2N-1$ & $2$\tabularnewline
\bottomrule
\end{tabular}
\par\end{centering}
\caption{\label{tab:EqAnalysis}Analysis of the number of equations and the
number of unknowns in solving the full QBE. Note that $\text{Tr}[H(t)]=0$ 
implies $\text{Tr}[F(t)]=0$.  The number of independent equations
in Eqs.~(\ref{eq:CHKO}) is  then $N^{2}-1$ rather than $N^{2}$. Similarly,
Eq.~(\ref{eq:CHKO-initial}) is equivalent to $\braket{e_{1}\big|F(0)\big|e_{1}}=0$
and $\braket{e_{k}\big|F(0)\big|e_{l}}=0$, where $k\ge2$. However, the
former condition is guaranteed as $\text{Tr}[H(0)]=0$, given that $\braket{e_{1}\big|F(0)\big|e_{1}}=\text{Tr}[F(0)]$.
Therefore, the number of independent equations in Eq.~(\ref{eq:CHKO-initial})
is $(N-1)^{2}$. Taking into account the redundant overall phase in
the state, and the normalization of $\ket{\psi(T)}$ and $\ket{\psi_{f}}$,
Eq.~(\ref{eq:final-BC}) imposes $2N-1$ constraints. }

\end{table*}

Takahashi~\citep{takahashi2013howfast} found the solution to Eq.~(\ref{eq:CHKO}),
using Lewis-Riesenfeld invariants~\citep{gungordu2012dynamical,lewis1969anexact}.
Using the fact that
the solution in the first stage admits a compact
form, it is  shown in Appendix~\ref{sec:Derivation-Governing-Eqs} that the Hamiltonian and the evolution operator are respectively
given by
\begin{align}
H(t) & =\frac{1}{\lambda_{0}(t)}\left[\lambda_{0}(0)\tilde{H}(t)+\sum_{j\ge1}\lambda_{j}(0)\tilde{\mathcal{X}}_{j}(t)\right]-G(t),\label{eq:H-expression}\\
U(t) & =V(t)\exp\left(-\text{i}\left[\lambda_{0}(0)H(0)+\sum_{j\ge1}\lambda_{j}(0)\mathcal{X}_{j}\right]\int_{0}^{t}\frac{d\tau}{\lambda_{0}(\tau)}\right).\label{eq:U-expression}
\end{align}
Here, $\tilde{\mathcal{O}}(t)=V(t)\mathcal{O}(0)V^{\dagger}(t)$ is
the operator in the frame generated by the restricted operators, $V(t)$
satisfies the Schr\"odinger-like equation
\begin{equation}
\dot{V}(t)=\text{i}G(t)V(t),\label{eq:V-SE}
\end{equation}
with the initial condition $V(0)=\mathbb{I}$ and the generator
\begin{equation}
G(t)\equiv\frac{\sum_{j\ge1}\lambda_{j}(t)}{\lambda_{0}(t)}\mathcal{X}_{j}. \label{eq:G-def}
\end{equation}
Further,  $\lambda_{j}(t)$ satisfy
\begin{equation}
\lambda_{0}(t)\dot{\lambda}_{0}(t)=-\frac{1}{2\omega^{2}}\sum_{j\ge1}\sum_{l\ge1}\lambda_{j}(t)\lambda_{l}(t)\eta_{jl}(t),\label{eq:lam-0}
\end{equation}
\begin{equation}
\dot{\lambda}_{j}(t)=\frac{1}{N}\sum_{l\ge1}\lambda_{l}(t)\eta_{jl}(t),\label{eq:lam-j}
\end{equation}
where $\mathscr{X}_{jl}\equiv\text{i}[\mathcal{X}_{j},\,\mathcal{X}_{l}]$
and
\begin{equation}
\eta_{jl}(t)=\text{Tr}[H(t)\mathscr{X}_{jl}].\label{eq:eta-exp}
\end{equation}
Equations~(\ref{eq:H-expression})-(\ref{eq:lam-j}) constitute another central
results of this work. We note an important symmetry of Eqs.~(\ref{eq:H-expression})-(\ref{eq:lam-j}).
\begin{align}
\lambda_{0}(t)\to\tilde{\lambda}_{j}(t)=\frac{1}{c}\lambda_{j}(t),\, & \eta_{jl}(t)\to\tilde{\eta}_{jl}(t)=\eta_{jl}(t),\label{eq:sym-a}\\
H(t)\to\tilde{H}(t)=H(t),\, & U(t)\to\tilde{U}(t)=U(t),\label{eq:sym-b}
\end{align}
 as long as $c\neq0$. 

In the second stage, one may assume that $\lambda_{0}(0)=0$. Therefore
\begin{align}
F(0) & =\sum_{j\in Y}\mu_{j}\mathcal{Y}_{j}+\sum_{j\in X}\lambda_{j}(0)\mathcal{X}_{j},\nonumber \\
H(0) & =\sum_{j\in Y}\mu_{j}\mathcal{Y}_{j},\label{eq:H0-choice}
\end{align}
where $\mathcal{Y}_{j}$ are the  allowed orthonormalgenerators, and $X$
and $Y$ denote the sets of indices for the disallowed operators and the
allowed operators, respectively. Furthermore, the norm constraint implies
that 
\begin{equation}
N\sum_{j\in Y}\mu_{j}^{2}=2\omega^{2}.\label{eq:mu-j-norm}
\end{equation}
We define 
\begin{equation}
\ket{\tilde{\psi}_{f}^{\perp}}\equiv\ket{\psi_{f}}-\braket{\psi_{i}\big|\psi_{f}}\ket{\psi_{i}}\label{eq:psif-perp-til}
\end{equation}
and 
\begin{equation}
\ket{\psi_{f}^{\perp}}\equiv\frac{\ket{\tilde{\psi}_{f}}}{\|\ket{\tilde{\psi}_{f}}\|}=\frac{\ket{\tilde{\psi}_{f}}}{\sin\Omega_{\text{B}}},\label{eq:psif-perp}
\end{equation}
where 
\begin{align}
\cos\Omega_{\text{B}} & \equiv|\braket{\psi_{i}\big|\psi_{f}}|\in[0,\,1],\\
\phi & \equiv\text{arg}(\braket{\psi_{i}\big|\psi_{f}}),\label{eq:phi-def}
\end{align}
and $\Omega_{\text{B}}\in[0,\,\pi/2]$ is the Bures angle~\citep{nielsen2010quantum}.
The remaining orthonormal basis is denoted by $\{\ket{e_{k}}\}_{k=1}^{N}$
with $\ket{e_{1}}=\ket{\psi_{i}}$ and $\ket{e_{2}}=\ket{\psi_{f}^{\perp}}$.
Equation~(\ref{eq:CHKO-initial}) indicates that $F(0)$ should have the following representation in the orthonormal basis $\{\ket{e_{k}}\}_{k=1}^{N}$,
\begin{equation}
F(0)=\left[\begin{array}{c|clcc}
0 & \times & \times & \cdots & \times\\
\hline \times\\
\times\\
\vdots &  &  & \text{\huge0}\\
\times
\end{array}\right],\label{eq:F-structure}
\end{equation}
where $\times$ denotes  matrix elements that are in general not
zero. Thus, the values of $\mu_{j}(0)$ and $\lambda_{j}(0)$ are chosen
such that Eqs.~(\ref{eq:F-structure}) and (\ref{eq:final-BC})
are satisfied. Equation~(\ref{eq:BC}) can be split into two parts
\begin{align}
\text{Re}\braket{\psi_{f}\big|H(T)F(T)\big|\psi_{f}} & =1,\label{eq:Re-BC}\\
\text{Im}\braket{\psi_{f}\big|H(T)F(T)\big|\psi_{f}} & =0.\label{eq:IM-BC}
\end{align}
The first part~(\ref{eq:Re-BC}) can be always satisfied by setting $c=\text{Re}\braket{\psi_{f}\big|H(T)F(T)\big|\psi_{f}}$
in the symmetry transformation~(\ref{eq:sym-a})-(\ref{eq:sym-b})
and renormalizing $\lambda_{j}(t)$'s. However, the second part, Eq.~(\ref{eq:IM-BC})
cannot be gauged away by renormalizing  $\lambda_{j}(t)$'s and
therefore imposes a nontrivial constraint on the Lagrange multipliers.
In fact, whether Eq.~(\ref{eq:IM-BC}) holds or not does not depend on the
choice of $\lambda_{j}(t)$ or $\tilde{\lambda}_{j}(t)$. This constraint
has been ignored previously in the literature. 

With Eq.~(\ref{eq:CHKO}), Eq.~(\ref{eq:IM-BC}) can be rewritten
as 
\begin{equation}
\braket{\psi_{f}\big|[H(T),\,G(T)]\big|\psi_{f}}=0,
\end{equation}
where we have used $[H(t),\,F(t)]=\lambda_{0}(t)[H(t),\,G(t)]$. In
terms of the initial state 
\begin{equation}
\braket{\psi_{i}\big|[\text{i}U^{\dagger}(T)\dot{U}(T),\,U^{\dagger}(T)G(T)U(T)]\big|\psi_{i}}=0.\label{eq:BConstraint-initial}
\end{equation}

Previous works have shown that solving the CHKO equation numerically
is notoriously difficult~\citep{wang2015quantum,campaioli2019algorithm}.
The problem is particularly complex at the many-body level.
Yet, we note that the solution process in the second stage is
almost trivial since all the equations are algebraic equations about
$\lambda_{j}(0)$ and $H(0)$.  The challenges in numerically solving
QB problem arise from determining self-consistently the dynamics of
the Lagrange multipliers, which are governed by the nonlinear differential
equation Eqs.~(\ref{eq:lam-0})-(\ref{eq:lam-j}). This is the reason
why analytic examples of the QB problem are very rare and remain limited
to very simple cases. Nevertheless, using the results above we report
a class of new analytic examples of the QB in Sec.~\ref{sec:examples}.

Next, instead of solving the QB problem completely, we propose a method
to generate time-extremal trajectories numerically. 
To reach this goal, we first leave the final state undetermined; it will be eventually
specified by imposing Eq.~(\ref{eq:final-BC}).
In leaving the final state unfixed, one can focus on the highly
nontrivial part of solving the QB problem, i.e., determining the dynamics
of the Lagrange multipliers:
\begin{enumerate}
\item For an initial state $\ket{\psi_{i}}$, we choose an initial Hamiltonian 
$H(0)$ that bears the form of Eqs.~(\ref{eq:H0-choice})-(\ref{eq:mu-j-norm})
so that it satisfies Eqs.~(\ref{eq:traceless})-(\ref{eq:term-constraint})
at $t=0$. Eq.~(\ref{eq:CHKO-initial}) implies that $F(0)$ must
have the structure of Eq.~(\ref{eq:F-structure}), which  introduces
additional constraints between $\mu_{j}$'s and $\lambda_{j}(0)$'s,
as discussed in Appendix~\ref{sec:Initial-form}. Section~\ref{subsec:multiple-constraints}
provides an example of how this step is performed in an analytic example
where the dynamics of the Lagrange multipliers are constants.
\item Choosing the initial values of $\mu_{j}$'s and $\lambda_{j}(0)$
that satisfy the constraints in Step $1$, one can generate the time-optimal
trajectories by numerically integrating Eqs.~(\ref{eq:V-SE}-\ref{eq:lam-j}).
The numerical integration will stop until it reaches some time $T$
such that Eq.~(\ref{eq:IM-BC}) is satisfied with $\text{Re}\braket{\psi(T)\big|H(T)F(T)\big|\psi(T)}\neq0$.
We note that Eq.~(\ref{eq:Re-BC}) can be satisfied by choosing $c=\text{Re}\braket{\psi(T)\big|H(T)F(T)\big|\psi(T)}$
and then renormalizing $\lambda_{j}(t)$ to $\tilde{\lambda}_{j}(t)$.
Upon setting $\ket{\psi_{f}}$ equal to $\ket{\psi(T)}$, we find
a time-extremal trajectory between $\ket{\psi_{i}}$ and $\ket{\psi_{f}}$.
\end{enumerate}
Although our numerical recipe here does not give the optimality of
the trajectories globally, it makes the generation of time-extremal
trajectories possible. The full QB problem may be solved
numerically by combining our algorithms here with some other searching
algorithms that can select the global minimum-time trajectories
among all the local extremal ones.

\section{Free evolution}

To illustrate how the QB solution can be found by making use of the two stages presented
in Sec.~\ref{sec:governing-solution}, we consider the simplest case
with $M=0$. We refer to this case as the free evolution, since it is
free from the operator constraint~(\ref{eq:norm-constraint}), and refer to the
case $M\ge1$ as the operator-restricted evolution, given that it is subject to the operator constrain~(\ref{eq:term-constraint}). 
The free evolution was previously discussed by CHKO~\citep{carlini2006timeoptimal}.
However, several subtleties in the problem are not discussed by CHKO,
including the constraint of the moving boundary effect~(\ref{eq:BC}). 

Since in this case $\tilde{\lambda}_{j}(t)=0,\,j\ge1$, the dynamics
of the Lagrangian multiplier in the first stage becomes trivial, $\tilde{\lambda}_{0}(t)=\tilde{\lambda}_{0}(0)\equiv\tilde{\lambda}_{0}$.
Furthermore, $V(t)=\mathbb{I}$ and therefore Eqs.~(\ref{eq:H-expression})-(\ref{eq:U-expression})
become 
\begin{align}
H_{\text{F}}(t) & =H_{\text{F}},\\
U_{\text{F}}(t) & =e^{-\text{i}H_{\text{F}}t},
\end{align}
where $H_{\text{F}}$ is some time-independent Hamiltonian. The
solution in the first stage readily follows. Let us now
discuss the solution in the second stage. Equation~(\ref{eq:CHKO-initial})
becomes 
\begin{equation}
H_{\text{F}}\ket{\psi_{i}}\bra{\psi_{i}}+\ket{\psi_{i}}\bra{\psi_{i}}H_{\text{F}}=H_{\text{F}}.\label{eq:matrix-eq}
\end{equation}
For $N=2$, according to Eq.~(\ref{eq:F-structure}), in the orthonormal
basis $\{\ket{\psi_{i}},\,\ket{\psi_{f}^{\perp}}\}$, $H_{\text{F}}$
becomes
\begin{equation}
H_{\text{F}}=\left[\begin{array}{c|c}
0 & h_{if}\\
\hline h_{if}^{*} & 0
\end{array}\right].\label{eq:H0}
\end{equation}
Eq.~(\ref{eq:norm-constraint}) implies that $|h_{if}|^{2}=\omega^{2}$,
so we can denote $h_{if}=\omega e^{-\text{i}\varphi}$. Therefore
\begin{equation}
H_{\text{F}}=\omega(\cos\varphi\sigma_{\text{eff}}^{x}+\sin\varphi\sigma_{\text{eff}}^{y}),\label{eq:d2-opt-Htilde}
\end{equation}
where 
\begin{align}
\sigma_{\text{eff}}^{x} & \equiv\ket{\psi_{i}}\bra{\psi_{f}^{\perp}}+\ket{\psi_{f}^{\perp}}\bra{\psi_{i}},\\
\sigma_{\text{eff}}^{y} & \equiv-\text{i}\left(\ket{\psi_{i}}\bra{\psi_{f}^{\perp}}-\ket{\psi_{f}^{\perp}}\bra{\psi_{i}}\right).
\end{align}
In the basis  $\{\ket{\psi_{i}},\,\ket{\psi_{f}^{\perp}}\}$, 
\begin{align}
\ket{\psi_{i}}=\begin{pmatrix}1\\
0
\end{pmatrix},\; & \ket{\psi_{f}^{\perp}}=\begin{pmatrix}0\\
1
\end{pmatrix},\,\ket{\psi_{f}}=\begin{pmatrix}\cos\Omega_{\text{B}}e^{\text{i}\phi}\\
\sin\Omega_{\text{B}}
\end{pmatrix},\label{eq:psi-ref}
\end{align}
and Eq.~(\ref{eq:d2-opt-Htilde}) becomes $H_{\text{F}}=\omega(\cos\varphi\sigma_{\text{eff}}^{x}+\sin\varphi\sigma_{\text{eff}}^{y})$.
It is then straightforward to compute 
\begin{equation}
e^{-\text{i}H_{\text{F}}T}\ket{\psi_{i}}=\begin{pmatrix}\cos(\omega T)\\
-\text{i}\sin(\omega T)e^{\text{i}\varphi}
\end{pmatrix}.
\end{equation}
Thus, the boundary condition~(\ref{eq:final-BC}) can be satisfied
if and only if 
\begin{align}
T & =\frac{\Omega_{\text{B}}}{|\omega|}+\frac{2\pi k}{|\omega|},\,k\in\mathbb{N},\\
\varphi & =2\pi l-\left(\phi-\frac{\pi}{2}\right),\,l\in\mathbb{Z}.
\end{align}
The optimal Hamiltonian is thus given by 
\begin{equation}
H_{\text{F}}=\omega(\sin\phi\sigma_{\text{eff}}^{x}-\cos\phi\sigma_{\text{eff}}^{y}),\label{eq:H-free}
\end{equation}
with the global minimum time being $\Omega_{\text{B}}/\omega$. Upon defining
$\ket{\psi_{f}^{\prime\perp}}=\text{i}e^{-\text{i}\phi}\ket{\psi_{f}^{\perp}}$,
Eq.~(\ref{eq:H-free}) reduces to Eq.~(13) in Ref.~\citep{carlini2006timeoptimal}.
In this work, we shall stick to Eq.~(\ref{eq:H-free}) because it
explicitly displays the role of the phase angle $\phi$, which plays
a role in the restricted evolution, as we shall see in Sec.~\ref{sec:examples}.
It is also worth noting that when $\braket{\psi_{i}\big|\psi_{f}}=0$,
$\phi$ can be chosen arbitrarily and there is an infinite family of
optimal Hamiltonians in this case. 

Furthermore, we note that
\begin{equation}
\text{Re}\braket{\psi_{f}\big|H_{\text{F}}\big|\psi_{f}}=\frac{1}{\tilde{\lambda_{0}}}\braket{\psi_{f}\big|H_{\text{F}}^{2}\big|\psi_{f}}=2\tilde{\lambda}_{0}\omega^{2}=1.
\end{equation}
Equation~(\ref{eq:Re-BC}) indicates that $\tilde{\lambda}_{0}=1/(2\omega^{2})$.
In addition, for this particular example, Eq.~(\ref{eq:IM-BC})
is satisfied automatically. The fact that Eq.~(\ref{eq:Re-BC})
imposes the exact value of $\lambda_{0}$ was noted
in~\citep{carlini2006timeoptimal}. 

Finally, we argue that Eq.~(\ref{eq:H-free}) is also the optimal
Hamiltonian for general $N$-level systems. The argument builds on the fact that
\begin{equation}
T=\int\frac{ds}{\Delta E(t)}=\frac{\int ds}{\omega},\label{eq:T-free}
\end{equation}
and thus the minimum-time trajectory is also the minimum-length trajectory.
On the other hand, any trajectory that is outside of the subspace
$\text{span}\{\ket{\psi_{i}},\,\ket{\psi_{f}^{\perp}}\}$ will take longer
than its projected trajectory onto this subspace.
There may not be a unique way of constructing the target Hamiltonian
that generates the projected trajectory. For example, with the method
of counter-diabatic driving \citep{demirplak2003adiabatic,demirplak2005assisted,demirplak2008onthe,berry2009transitionless},
one can construct the generating Hamiltonian for any given trajectory
$\ket{\psi(t)}$ as follows: First, one  constructs a set of orthonormal
trajectories $\{\ket{\psi_{n}(t)}\}$, $\langle \psi_{n}(t)|\psi_{m}(t)\rangle=\delta_{nm}$
with $\ket{\psi_{0}(t)}=\ket{\psi(t)}$. Then, the target Hamiltonian
is derived as $H(t)=\text{i}\dot{U}(t)U^{\dagger}(t)$ with $U(t)=\sum_{n}\ket{\psi_{n}(t)}\bra{\psi_{n}(t)}$.

Thus, for general $N$-level systems, it suffices to consider the subspace
$\text{span}\{\ket{\psi_{i}},\,\ket{\psi_{f}^{\perp}}\}$, with the
optimal Hamiltonian in this subspace being also given by Eq.~(\ref{eq:H-free}).

\section{\label{sec:speed}The speed of evolution under constraints}

Having discussed the general solutions to the QB problem, let us calculate
the speed of evolution according to Eqs.~(\ref{eq:H-expression},~\ref{eq:U-expression}).
The importance of the speed of evolution cannot be overemphasized
in quantum information processing. For example, it is generally conjectured
that with more constraints, the speed of evolution will be reduced
in general when compared to the free evolution~\citep{carlini2006timeoptimal,bukov2019geometric}.
Nevertheless, a rigorous and systematic study on how the speed of
evolution for time-optimal trajectories is affected under constraints
has not been reported in the literature, to the best of our knowledge. Next, we rigorously prove this
assertion.
\begin{thm}
\label{thm:speed}The speed of evolution under constraints, in general,
can not exceed $\omega$.
\end{thm}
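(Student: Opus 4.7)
I would prove the theorem via a pointwise bound, using only the traceless constraint~(\ref{eq:traceless}) and the norm constraint~(\ref{eq:norm-constraint}); remarkably, neither the CHKO equation nor the operator constraint~(\ref{eq:term-constraint}) is actually needed to obtain $\Delta E\le\omega$, since those ingredients only serve to further reduce the speed rather than cap it. This suggests a natural two-step structure: first establish the universal bound, then return to the full QB machinery to characterize the slowdown quantitatively.

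The first step is Popoviciu's variance inequality: for any Hermitian operator $A$ with extreme eigenvalues $a_{\max}\ge a_{\min}$ and any pure state, $\text{Var}(A)\le(a_{\max}-a_{\min})^{2}/4$. Applied to the instantaneous Hamiltonian $H(t)$ in the state $\ket{\psi(t)}$, this yields $\Delta E^{2}(t)\le(h_{\max}(t)-h_{\min}(t))^{2}/4$. Tracelessness forces $h_{\max}\ge 0\ge h_{\min}$, and setting $u=h_{\max}$, $v=-h_{\min}$, the elementary inequality $(u+v)^{2}\le 2(u^{2}+v^{2})$ together with the norm constraint gives
\begin{equation*}
\frac{(h_{\max}-h_{\min})^{2}}{4}\le\frac{h_{\max}^{2}+h_{\min}^{2}}{2}\le\frac{1}{2}\text{Tr}[H^{2}(t)]=\omega^{2},
\end{equation*}
so that $\Delta E(t)\le\omega$ at every instant.

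To analyse saturation and expose the role of the operator restriction quantitatively, I would then feed the identity~(\ref{eq:F-id}) into the decomposition $F(t)=\lambda_{0}(t)H(t)+R(t)$, where $R(t)\equiv\sum_{j\ge 1}\lambda_{j}(t)\mathcal{X}_{j}$ lies entirely in the forbidden operator subspace. Choosing $\mathcal{O}=\mathbb{I},\,F,\,R$ in turn yields $\langle F\rangle=0$, $\langle F^{2}\rangle=\tfrac{1}{2}\text{Tr}[F^{2}]$, and a closed form for the cross term $\text{Re}\langle HR\rangle$; eliminating these using $\text{Tr}[HR]=0$ (a consequence of the operator constraint) and $\text{Tr}[F^{2}]=2\omega^{2}\lambda_{0}^{2}+N\sum_{j\ge 1}\lambda_{j}^{2}$ should produce the sharper identity
\begin{equation*}
\text{Var}(H(t))=\omega^{2}-\frac{1}{\lambda_{0}^{2}(t)}\left[\tfrac{1}{2}\text{Tr}[R^{2}(t)]-\text{Var}(R(t))\right],
\end{equation*}
in which the bracket is non-negative by exactly the same Popoviciu argument applied to the traceless operator $R(t)$. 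This displays $\Delta E=\omega$ iff $R\equiv 0$, i.e., iff no operator constraint is binding, recovering the free-evolution benchmark of Sec.~\ref{sec:governing-solution}.

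The main obstacle I foresee is not the inequality itself, which is elementary, but the bookkeeping needed for the sharper identity above, together with the handling of the singular case $\lambda_{0}(t)\to 0$ where the decomposition $F=\lambda_{0}H+R$ degenerates. I would address the latter either by invoking the scaling symmetry~(\ref{eq:sym-a})-(\ref{eq:sym-b}) to renormalize $\lambda_{0}$ away from zero, or by rearranging the identity as $\lambda_{0}^{2}[\omega^{2}-\text{Var}(H)]=\tfrac{1}{2}\text{Tr}[R^{2}]-\text{Var}(R)\ge 0$, which is free of division and remains consistent at the singularity.
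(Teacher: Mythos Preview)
Your proposal is correct. The two-step structure you outline differs from the paper in an interesting way: your \emph{first} step (direct application of Popoviciu to $H(t)$ combined with $(u+v)^{2}\le 2(u^{2}+v^{2})$) is more elementary and more general than anything in the paper's proof, since it uses only the traceless and norm constraints and bypasses the CHKO machinery entirely. The paper never takes this shortcut; instead it goes straight to the exact identity, writing $H(t)=F(t)/\lambda_{0}(t)-G(t)$ and using Eq.~(\ref{eq:F-id}) with $\mathcal{O}=F$ and $\mathcal{O}=G$ to obtain $\Delta E^{2}(t)=\omega^{2}-\bigl(\tfrac{1}{2}\mathrm{Tr}[G^{2}(t)]-\mathrm{Var}[G(t)]\bigr)$, and then applies Popoviciu to $G(t)$ rather than to $H(t)$. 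Your \emph{second} step is precisely this argument in disguise: your $R(t)$ equals $\lambda_{0}(t)G(t)$, so your sharper identity is the paper's identity after dividing the bracket by $\lambda_{0}^{2}$. What your approach buys is a cleaner logical separation---the bound holds for \emph{any} traceless norm-constrained Hamiltonian, extremal or not---while the paper's approach buys the exact deficit formula in a single pass. One small overstatement: the ``iff $R\equiv 0$'' saturation claim is too strong; $\tfrac{1}{2}\mathrm{Tr}[R^{2}]=\mathrm{Var}(R)$ can also hold when $R$ has rank two with opposite eigenvalues and $\ket{\psi(t)}$ is the equal superposition of the corresponding eigenvectors, so saturation does not force all Lagrange multipliers to vanish.
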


\begin{proof}
The speed of evolution  can be rewritten as 
\begin{equation}
\Delta E^{2}(t)=\text{Var}\left[F(t)/\lambda_{0}(t)-G(t)\right]\big|_{\ket{\psi(t)}}\,.
\end{equation}
Using Eq.~(\ref{eq:F-id}), one finds that 
\begin{align}
 & \langle F^{2}(t)\rangle-\lambda_{0}(t)(\langle F(t)G(t)\rangle+\langle G(t)F(t)\rangle)\nonumber \\
= & \frac{1}{2}\text{Tr}[F^{2}(t)]-\lambda_{0}(t)\text{Tr}[F(t)G(t)]\nonumber \\
= & \frac{1}{2}\text{Tr}[F^{2}(t)]-\lambda_{0}^{2}(t)\text{Tr}[G^{2}(t)]\nonumber \\
= & \omega^{2}\lambda_{0}^{2}(t)-\frac{1}{2}\lambda_{0}^{2}(t)\text{Tr}[G^{2}(t)].
\end{align}
Therefore, 
\begin{equation}
\Delta E^{2}(t)=\omega^{2}-\left(\text{Tr}[G^{2}(t)]/2-\text{Var}[G(t)]\big|_{\ket{\psi(t)}}\right).
\end{equation}
We recall the following inequality, often used in quantum metrology~\citep{giovannetti2006quantum,Boixo07,Chenu17,Beau17,yang2021variational},
\begin{equation}
\text{Var}[G(t)]\big|_{\ket{\psi(t)}}\leq\left[\frac{g_{\max}(t)-g_{\min}(t)}{2}\right]^{2}
\end{equation}
where $g_{k}(t)$ is the eigenvalue of $G(t)$. 
Thanks to it, 
\begin{align}
 & \text{Tr}[G^{2}(t)]/2-\text{Var}[G(t)]\big|_{\ket{\psi(t)}}\nonumber \\
\ge & \left[\frac{g_{\max}(t)-g_{\min}(t)}{2}\right]^{2}+\sum_{k\neq\max,\,\min}\frac{g_{k}^{2}(t)}{2}\ge0,
\end{align}
which concludes the proof. 
\end{proof}
According to Eq.~(\ref{eq:T-free}), the distance of the minimum-time
trajectory for the free evolution is also the minimum-distance trajectory.
In addition, Theorem~\ref{thm:speed} indicates that the speed in
the free evolution is maximum.  Therefore we have the following corollary:
\begin{cor}
\label{corr:minimum}Free evolution generates the global minimum-time
trajectory among all the time-extremal trajectories. 
\end{cor}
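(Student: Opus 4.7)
The plan is to chain the pointwise speed upper bound of Theorem~\ref{thm:speed} with the geometric lower bound on Fubini-Study arclength, and then observe that the free-evolution solution derived in the preceding section saturates the resulting universal bound.

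First I would use the Anandan-Aharonov identity $\sqrt{g_{tt}}=\Delta E(t)$ recalled after Eq.~(\ref{eq:LS-def}) to rewrite the duration of any time-extremal trajectory as $T=\int ds/\Delta E(t)$, where $ds$ is the Fubini-Study line element on projective Hilbert space. Theorem~\ref{thm:speed} then provides the pointwise bound $\Delta E(t)\le\omega$, uniformly in the number and choice of operator constraints $\mathcal{X}_{j}$, so that $T\ge L/\omega$ with $L\equiv\int ds$ the total arclength of the trajectory. Independently, the Bures angle $\Omega_{\text{B}}$ is the Fubini-Study geodesic distance between the rays of $\ket{\psi_{i}}$ and $\ket{\psi_{f}}$, so every admissible trajectory satisfies $L\ge\Omega_{\text{B}}$. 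The movable-boundary framework of Sec.~\ref{sec:The-full-QB} is what legitimises comparing $L$ against the ray-to-ray geodesic distance, since the final-time $U(1)$ phase is free. Combining the two inequalities yields the universal lower bound $T\ge\Omega_{\text{B}}/\omega$, valid for every operator constraint set including $M=0$.

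The preceding section already exhibits a free-evolution trajectory of duration $T_{\text{F}}=\Omega_{\text{B}}/\omega$ in its $k=0$ branch, so the universal lower bound is attained by a free-evolution trajectory. This immediately yields the statement of Corollary~\ref{corr:minimum}: among all time-extremal trajectories, the one generated by the free evolution is globally minimum-time.

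The step I expect to need the most care is justifying the geodesic inequality $L\ge\Omega_{\text{B}}$ up to the $U(1)$ target-phase ambiguity, and, relatedly, singling out the $k=0$ branch of $T_{\text{F}}=\Omega_{\text{B}}/\omega+2\pi k/\omega$: both rely on the movable-endpoint formulation of Sec.~\ref{sec:The-full-QB} to preclude trajectories that wind multiple times through projective Hilbert space before terminating at the ray of $\ket{\psi_{f}}$. Once this geometric point is secured, the rest of the argument is an elementary chain of inequalities requiring no further delicate estimates.
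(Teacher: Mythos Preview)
Your proposal is correct and follows essentially the same route as the paper: the paper's own justification combines Eq.~(\ref{eq:T-free})---which identifies the free-evolution minimum-time trajectory with the minimum-length (geodesic) trajectory---with Theorem~\ref{thm:speed} to conclude that free evolution simultaneously minimises arclength and maximises speed, hence minimises time globally. Your argument is the same chain of inequalities $T\ge L/\omega\ge\Omega_{\text{B}}/\omega$ spelled out more explicitly, and your care about the $U(1)$ endpoint phase and the $k=0$ branch goes slightly beyond what the paper makes explicit.
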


Given a set of restricted operators $\{\mathcal{X}_{j}\}$, depending
on the initial and final states, it may occur that the optimal Hamiltonian
for free-evolution given by (\ref{eq:H-free}) is still a legitimate
optimal Hamiltonian that does not contain the disallowed operators
$\{\mathcal{X}_{j}\}$, which corresponds to the solution of Lagrangian
multipliers $\lambda_{j}(t)=0$ for all $j$. On the other hand, according
to Corollary~\ref{corr:minimum}, the free evolution is the global
minimum-time trajectory. In this case, the restricted operators are
not really in effect and  the dynamics of the evolution is then 
trivially restricted. For the dynamics to be non-trivially restricted,
we have the following theorem:
\begin{thm}
\label{thm:non-trivial}The extremal evolution is non-trivially restricted
by the set $\{\mathcal{X}_{j}\}$ if and only if there exists at least
one operator $\mathcal{X}_{j}$ in the restricted set such that 
\begin{equation}
\text{Im}\left[\braket{\psi_{f}^{\perp}\big|\mathcal{X}_{j}\big|\psi_{i}}e^{-\text{i}\phi}\right]\neq0,\label{eq:non-free}
\end{equation}
\end{thm}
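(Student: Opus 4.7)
The plan is to reduce this theorem to a single trace calculation. Drawing on Corollary~\ref{corr:minimum} together with the discussion preceding the theorem, the restricted evolution is \emph{trivially} restricted precisely when the free-evolution optimal Hamiltonian $H_{\text{F}}$ of Eq.~(\ref{eq:H-free}) already satisfies every term constraint $\text{Tr}[H_{\text{F}}\mathcal{X}_j]=0$ in Eq.~(\ref{eq:term-constraint}). Indeed, in that case $H_{\text{F}}$ is a valid admissible Hamiltonian for the restricted problem, the Corollary asserts it gives the global minimum-time trajectory, and the Lagrange multipliers $\lambda_j(t)$ for $j\ge 1$ may be taken to vanish. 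Consequently, the dynamics is \emph{non-trivially} restricted if and only if there exists some $j$ in the restricted index set such that $\text{Tr}[H_{\text{F}}\mathcal{X}_j]\neq 0$. The theorem therefore becomes the algebraic statement that this inequality is equivalent to Eq.~(\ref{eq:non-free}).

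To execute this, I would first substitute the projector definitions of $\sigma_{\text{eff}}^x$ and $\sigma_{\text{eff}}^y$ into Eq.~(\ref{eq:H-free}) and apply the identity $\sin\phi\pm\text{i}\cos\phi=\pm\text{i}\,e^{\mp\text{i}\phi}$ to obtain the compact rank-two form
\[
H_{\text{F}}=\text{i}\omega\!\left(e^{-\text{i}\phi}\ket{\psi_i}\!\bra{\psi_f^{\perp}}-e^{\text{i}\phi}\ket{\psi_f^{\perp}}\!\bra{\psi_i}\right).
\]
Taking the trace against the Hermitian generator $\mathcal{X}_j$ and using $\braket{\psi_i|\mathcal{X}_j|\psi_f^{\perp}}=\braket{\psi_f^{\perp}|\mathcal{X}_j|\psi_i}^{*}$, the expression collapses to a number minus its conjugate and yields
\[
\text{Tr}[H_{\text{F}}\mathcal{X}_j]=-2\omega\,\text{Im}\!\left[\braket{\psi_f^{\perp}|\mathcal{X}_j|\psi_i}\,e^{-\text{i}\phi}\right].
\]
Because $\omega\neq 0$, the existence of some $j$ with $\text{Tr}[H_{\text{F}}\mathcal{X}_j]\neq 0$ is exactly the existence of some $j$ violating Eq.~(\ref{eq:non-free}), completing the equivalence.

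The only subtlety, and hence the place where the plan requires care rather than calculation, is the degenerate case $\braket{\psi_i|\psi_f}=0$. There, as explicitly noted in the text, $\phi$ is not fixed by Eq.~(\ref{eq:phi-def}) and there is an infinite family of free-evolution optima parametrized by $\phi$. To make the biconditional rigorous one must read the criterion as requiring that for every admissible choice of $\phi$ some $\mathcal{X}_j$ violates Eq.~(\ref{eq:non-free}); equivalently, no member of the $\phi$-family of free-evolution Hamiltonians is admissible. The trace identity above makes this clean, since the $\phi$-dependence enters only through the single scalar factor $e^{-\text{i}\phi}$, so the one-parameter freedom can be absorbed into the statement without altering the algebra. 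Beyond this rephrasing, the argument is elementary and, notably, does not require any analysis of the nonlinear Lagrange-multiplier dynamics of Eqs.~(\ref{eq:lam-0})--(\ref{eq:lam-j}).
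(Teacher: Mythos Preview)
Your proposal is correct and follows essentially the same route as the paper: both reduce the statement to whether $\text{Tr}[H_{\text{F}}\mathcal{X}_j]\neq 0$ for some $j$, then compute this trace from the explicit form of $H_{\text{F}}$ in Eq.~(\ref{eq:H-free}) to obtain (up to the nonzero factor $\omega$) the imaginary part in Eq.~(\ref{eq:non-free}). Your treatment is in fact slightly more careful than the paper's, as you retain the factor $\omega$ in the trace and explicitly flag the degenerate case $\braket{\psi_i|\psi_f}=0$, which the paper's proof passes over in silence.
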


\begin{proof}
To exclude the case of $\lambda_{j}(t)=0$ for all $j$, there must
be some operator $\mathcal{X}_{j}$ such that $\text{Tr}[H_{\text{F}}\mathcal{X}_{j}]\neq0$.
Otherwise, the optimal Hamiltonian is the one for the free evolution,
i.e., Eq.~(\ref{eq:H-free}). To see this, we note that
\begin{equation}
\text{Tr}[H_{\text{F}}\mathcal{X}_{j}]=\text{i}(\braket{\psi_{f}^{\perp}\big|\mathcal{X}_{j}\big|\psi_{i}}e^{-\text{i}\phi}-\text{h.c.}).
\end{equation}
Thus, to exclude the case of free evolution, Eq.~(\ref{eq:non-free})
must be satisfied.
\end{proof}

\section{\label{sec:examples}A class of analytically solvable examples for
restricted evolution}

In this section, we consider an important class of solvable examples
of the QB problem where the $\{\mathcal{X}_{j}\}$ forms a closed subalgebra.
The results are summarized in the following theorem:
\begin{thm}
\label{thm:new-class}If the restricted operators form a closed Lie
subalgebra of $su(N)$, i.e.,

\begin{equation}
\mathscr{X}_{kl}\in\text{span}\{\mathcal{X}_{j}\},\,\forall k,\,l,\label{eq:close-algebra}
\end{equation}
 for all the time-extremal trajectories the Lagrange multipliers
are time-independent.  The  optimal Hamiltonian and the unitary
evolution operators can then be expressed as
\begin{align}
H_{\text{}}(t) & =e^{\text{i}Gt}H(0)e^{-\text{i}Gt},\label{eq:H-closed}\\
U_{\text{}}(t) & =e^{\text{i}Gt}e^{-\text{i}[H(0)+G]t},\label{eq:U-closed}
\end{align}
where $\ket{\psi_{f}^{\perp}}$ and $\phi$ are defined in Eqs.~(\ref{eq:psif-perp})
and ~(\ref{eq:phi-def}), respectively, and where $G$ is defined in Eq.~(\ref{eq:G-def})
but is independent of time.
\end{thm}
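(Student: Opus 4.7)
The plan is to show that the closed-subalgebra hypothesis forces the coupling coefficients $\eta_{jl}(t)$ appearing in the governing equations for the Lagrange multipliers to vanish identically, so that (\ref{eq:lam-0})--(\ref{eq:lam-j}) trivialise and the general formulas (\ref{eq:H-expression})--(\ref{eq:U-expression}) collapse to the compact exponentials (\ref{eq:H-closed})--(\ref{eq:U-closed}). The proof therefore splits naturally into one genuine observation followed by a direct substitution.

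The key step is to expand $\mathscr{X}_{jl}$ in the restricted basis. Under the hypothesis (\ref{eq:close-algebra}), and since both $\mathscr{X}_{jl}$ and the $\mathcal{X}_m$ are Hermitian, I may write $\mathscr{X}_{jl} = \sum_m c^{m}_{jl}\, \mathcal{X}_m$ with real coefficients $c^{m}_{jl}$ (essentially the structure constants of the subalgebra). Inserting this into (\ref{eq:eta-exp}) and invoking the term constraint (\ref{eq:term-constraint}) gives
\[
\eta_{jl}(t) \;=\; \sum_m c^{m}_{jl}\, \text{Tr}[H(t)\mathcal{X}_m] \;=\; 0
\]
for all $t$ and all $j,l \geq 1$. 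Equations (\ref{eq:lam-0})--(\ref{eq:lam-j}) then immediately give $\dot{\lambda}_{0}(t) = \dot{\lambda}_{j}(t) = 0$, so every Lagrange multiplier is constant, $\lambda_j(t) = \lambda_j(0)$. Consequently, the generator (\ref{eq:G-def}) is time-independent, $G(t) \equiv G$, and integrating (\ref{eq:V-SE}) with $V(0) = \mathbb{I}$ yields $V(t) = e^{\text{i}Gt}$.

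The remaining task is direct substitution into (\ref{eq:H-expression})--(\ref{eq:U-expression}). In (\ref{eq:H-expression}) the prefactor $\lambda_0(0)/\lambda_0(t)$ becomes unity, and $\sum_{j\ge 1}[\lambda_j(0)/\lambda_0(0)]\,\tilde{\mathcal{X}}_j(t) = V(t)\, G\, V^{\dagger}(t) = G$ because $G$ commutes with $e^{\text{i}Gt}$; the two copies of $G$ therefore cancel and one is left with $H(t) = e^{\text{i}Gt} H(0) e^{-\text{i}Gt}$, which is (\ref{eq:H-closed}). In (\ref{eq:U-expression}) the time integral evaluates to $\int_0^t d\tau/\lambda_0(\tau) = t/\lambda_0(0)$, while the bracketed operator factorises as $\lambda_0(0)[H(0) + G]$, producing $U(t) = e^{\text{i}Gt}\, e^{-\text{i}[H(0)+G]t}$, which is (\ref{eq:U-closed}). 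The only non-routine point in the entire argument is recognising that closure of the restricted subalgebra turns the coupling coefficients $\eta_{jl}$ into linear combinations of the very quantities annihilated by the term constraint (\ref{eq:term-constraint}); once this is observed, the remainder is essentially bookkeeping and the compact exponential forms of $H(t)$ and $U(t)$ follow without further work.
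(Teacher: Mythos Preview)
Your proposal is correct and follows essentially the same route as the paper: both proofs observe that the closed-subalgebra hypothesis together with the term constraint (\ref{eq:term-constraint}) forces $\eta_{jl}(t)=0$, then read off from (\ref{eq:lam-0})--(\ref{eq:lam-j}) that the Lagrange multipliers are constant, and finally substitute $G(t)\equiv G$, $V(t)=e^{\text{i}Gt}$ into (\ref{eq:H-expression})--(\ref{eq:U-expression}) to obtain (\ref{eq:H-closed})--(\ref{eq:U-closed}). The paper additionally appends an inductive consistency check, showing that $\text{Tr}\bigl[\tfrac{d^nH(t)}{dt^n}\mathscr{X}_{kl}\bigr]=0$ for all $n$, which verifies that the constraint (\ref{eq:term-constraint}) is preserved under the resulting flow once it is imposed at $t=0$; this is a self-consistency remark rather than a logically independent step, and your argument is complete without it.
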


\begin{proof}
Since the optimal $H(t)$ is the linear combination of the basis $\mathcal{Y}_{j}$,
Eq.~(\ref{eq:close-algebra}) implies that  
\begin{equation}
\text{Tr}[H(t)\mathscr{X}_{kl}]=0,\label{eq:Tr-Ht-calX}
\end{equation}
whence it follows  that $\eta_{jl}(t)=0$. According to
Eqs.~(\ref{eq:lam-0})-(\ref{eq:lam-j}), the dynamics of the Lagrangian
multipliers becomes trivial as they are constant in time $\lambda_{j}(t)=\lambda_{j}(0)$
and $G(t)$ is time-independent. Therefore 
\begin{equation}
\frac{1}{\lambda_{0}}\sum_{j\ge1}\lambda_{j}\tilde{\mathcal{X}}_{j}(t)=V(t)\left(\frac{1}{\lambda_{0}}\sum_{j\ge1}\lambda_{j}\mathcal{X}_{j}\right)V^{\dagger}(t)=V(t)GV^{\dagger}(t)=G.
\end{equation}
In this case Eqs.~(\ref{eq:H-expression})-(\ref{eq:U-expression})
simplify and one finds
\begin{equation}
\text{Tr}[\dot{H}(t)\mathscr{X}_{kl}]=\text{i}\text{Tr}\left(H(t)[\mathscr{X}_{kl},\,G]\right)=0,\label{eq:derivatives}
\end{equation}
where we have used $\dot{\tilde{\mathcal{O}}}(t)=\text{i}[G,\,\tilde{\mathcal{O}}(t)]$
and $[\mathscr{X}_{kl},\,G]\in\text{span}\{\mathcal{X}_{j}\}$ due
to the closure of $\text{span}\{\mathcal{X}_{j}\}$. By mathematical
induction, it follows that
\begin{equation}
\text{Tr}\left[\frac{d^{n}H(t)}{dt^{n}}\mathscr{X}_{kl}\right]=\text{i}^{n}\text{Tr}\left(H(t)[\cdots[[\mathscr{X}_{kl},\,G],\,G],\,\cdots G]\right)=0,\,\forall n.\label{eq:n-derivatives}
\end{equation}
Eq.~(\ref{eq:n-derivatives}) provides a further consistent check
of Eq.~(\ref{eq:Tr-Ht-calX}): as long as one chooses $\text{Tr}[H(0)\mathcal{X}_{kl}]=0$,
thanks to Eq.~(\ref{eq:n-derivatives}) and the continuity of $H(t)$,
Eq.~(\ref{eq:Tr-Ht-calX}) always holds at later times. 
\end{proof}
A few comments are in order. First, when $\text{span}\{\mathcal{X}_{j}\}$
does not form a closed subalgebra, the solution can be complicated
as $G(t)$ will become time-dependent, which presents some analytical
difficulty in solving the Schr\"odinger-like equation~(\ref{eq:V-SE}).
Second, as  mentioned in Sec.~\ref{sec:governing-solution},
solving the dynamics of the Lagrangian multiplier in stage (i) is then
difficult. Theorem~\ref{thm:new-class} specifies the new class of examples
in which the Lagrangian multiplier can be trivially found, i.e., as constants.
Their values together with $H(0)$ in Eqs.~(\ref{eq:H-closed})-(\ref{eq:U-closed})
are determined in stage (ii) in Sec.~\ref{sec:governing-solution}. 

Next, when $U(t)$ takes the form of Eq.~(\ref{eq:U-closed}),
Eq.~(\ref{eq:BConstraint-initial}) becomes
\begin{equation}
\braket{\psi_{i}\big|[H(0)+G,\,U^{\dagger}(T)GU(T)]\big|\psi_{i}}=0,\label{eq:BC-closed}
\end{equation}
which is, like Eqs.~(\ref{eq:IM-BC})-(\ref{eq:BConstraint-initial}),
invariant under renormalization of the Lagrange multipliers. 

In stage (ii), Eq.~(\ref{eq:CHKO-initial})  introduces the constraints
between $\mu_{j}$ and $\lambda_{j}$. In principle, one can solve for
the set of $\lambda_{j}$ in terms of the set of $\mu_{j}$. We refer to the Lagrangian
multipliers that are independent (dependent) of $\mu_{j}$ with $j\in Y$
as \textit{free} (\textit{constrained}). Practically, when the number
of restricted operators is large, the number of constant Lagrangian
multipliers can be also large, which may make the calculation tedious.
Theorem \ref{thm:free-Lag} below indicates that under the condition that the
boundary constraints~(\ref{eq:final-BC}), (\ref{eq:BC-closed}) are
preserved, one can always set the ``free'' Lagrangian to zero,
which reduces the calculation dramatically. 

\begin{thm}
\label{thm:free-Lag}Setting the free Lagrange multipliers in Eqs.~(\ref{eq:H-closed})-(\ref{eq:U-closed}),
i.e., $\lambda_{j}$'s that are independent of $\mu_{j}$ with $j\in Y$,
to be zero will not increase the global minimum-time, as long as the
boundary constraints~(\ref{eq:final-BC}), (\ref{eq:BC-closed}) is
preserved.
\end{thm}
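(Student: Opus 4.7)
The plan is to show that the free Lagrange multipliers, by their very definition as those independent of the allowed-operator coefficients $\mu_j$, correspond to a redundant gauge freedom in the QB solution: varying them does not alter the physical trajectory $\ket{\psi(t)}$ up to an overall phase, and therefore does not affect the minimum time $T$ at which $\ket{\psi(T)}\sim\ket{\psi_f}$. Under the stated hypothesis that Eqs.~(\ref{eq:final-BC}) and (\ref{eq:BC-closed}) remain satisfied when the free multipliers are switched off, the resulting zero-free-Lagrangian trajectory is itself a bona fide extremal trajectory whose duration is at most the original one.

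First I would extract the defining structural property of a free multiplier from Eq.~(\ref{eq:F-structure}). In the basis $\{\ket{e_k}\}$ with $\ket{e_1}=\ket{\psi_i}$, the allowed entries of $F(0)$ sit only in the first row and column (excluding the $(1,1)$ slot). Thus a multiplier $\lambda_j$ with $j\in X$ is tied to the $\mu_j$'s precisely through the matrix elements $\braket{e_1|\mathcal{X}_j|e_k}$ with $k\geq 2$. It is \emph{free} if and only if all such elements vanish, which is equivalent to $\mathcal{X}_j\ket{\psi_i}\propto\ket{\psi_i}$. Splitting $G=G^{\text{con}}+G^{\text{free}}$ accordingly, one obtains $G^{\text{free}}\ket{\psi_i}=g_0\ket{\psi_i}$ for a real scalar $g_0$.

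Next I would use the closed-subalgebra hypothesis together with this eigenvector property to argue that, along the dynamical orbit, $G^{\text{free}}$ contributes only a phase to $U(t)\ket{\psi_i}$ given by Eq.~(\ref{eq:U-closed}). The idea is that the two exponential factors $e^{\mathrm{i}Gt}$ and $e^{-\mathrm{i}(H(0)+G)t}$ each absorb $G^{\text{free}}$ through a factor $e^{\pm \mathrm{i}g_0 t}$ when acting on $\ket{\psi_i}$, and these cancel, leaving $\ket{\psi(t)}$ equal to the trajectory $U_0(t)\ket{\psi_i}$ obtained by setting all free multipliers to zero. The projective trajectory — in particular, the instant $T$ at which it meets $\ket{\psi_f}$ up to phase — is therefore identical in the two cases, and so the minimum times coincide. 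As a fall-back bound that does not rely on exact phase cancellation, the speed formula $\Delta E^2(t) = \omega^2 - \bigl(\mathrm{Tr}[G^2(t)]/2 - \mathrm{Var}[G(t)]\bigr)$ derived in the proof of Theorem~\ref{thm:speed} shows that dropping $G^{\text{free}}$ reduces the non-negative subtracted quantity and hence never decreases $\Delta E$; by $T=\int ds/\Delta E$, this yields $T_0\leq T$.

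The main obstacle is rigorously establishing the phase-only action of $G^{\text{free}}$ on the orbit. In general $[G^{\text{free}},H(0)+G^{\text{con}}]$ need not vanish, so Eq.~(\ref{eq:U-closed}) does not factor trivially; the argument must combine the eigenvector property $G^{\text{free}}\ket{\psi_i}\propto\ket{\psi_i}$ with a careful use of the closed-subalgebra structure (so that commutators stay inside $\mathrm{span}\{\mathcal{X}_j\}$) and with the observation that $\mathcal{P}(t)=U(t)\mathcal{P}(0)U^\dagger(t)$ is preserved by the QBE. A subordinate technical point is to verify that the gauge-fixing $\{\lambda_j\}_{j\in X^{\text{free}}}\mapsto 0$ can be performed compatibly with Eq.~(\ref{eq:BC-closed}) — precisely the hypothesis of the theorem — rather than re-routing some of the free-multiplier content into the constrained sector through the boundary constraint.
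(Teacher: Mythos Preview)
Your characterization of ``free'' Lagrange multipliers is incorrect, and this undermines the entire argument. You claim that $\lambda_j$ is free if and only if $\mathcal{X}_j\ket{\psi_i}\propto\ket{\psi_i}$, arguing that the constraints from Eq.~(\ref{eq:F-structure}) act through the first-row/column matrix elements $\braket{e_1|\mathcal{X}_j|e_k}$. This is backwards: Eq.~(\ref{eq:F-structure}) forces the \emph{lower-right block} $\braket{e_k|F(0)|e_l}$ ($k,l\geq 2$) to vanish, so the linear relations that tie the $\lambda_j$'s to the $\mu_j$'s involve $\braket{e_k|\mathcal{X}_j|e_l}$ for $k,l\geq 2$. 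A multiplier is ``free'' when it parametrizes a direction in $\lambda$-space that is unconstrained by the $\mu$'s after solving this linear system---not when its $\mathcal{X}_j$ annihilates $\ket{\psi_i}$ up to phase. The paper's own two-qubit example (Sec.~\ref{subsec:multiple-constraints}) refutes your criterion: $\lambda_{13}$ is declared free, yet $\sigma_1^1\sigma_2^3\ket{11}=-\ket{01}\not\propto\ket{11}$. Once this characterization fails, the claim that $G^{\text{free}}\ket{\psi_i}=g_0\ket{\psi_i}$ is unsupported, and the phase-cancellation argument (which you already flag as the main obstacle, and never actually close) collapses.

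Your fall-back speed bound is also not a valid argument. The quantity $\mathrm{Tr}[G^2]/2-\mathrm{Var}[G]$ is not monotone under removing a summand from $G$; and even if $\Delta E$ were pointwise larger with $G^{\text{free}}$ removed, the trajectory $\ket{\psi(t)}$ itself changes with $G$, so the arclength $\int ds$ changes too and $T=\int ds/\Delta E$ cannot be compared instant by instant. The paper's proof avoids all of this with a one-line variational observation: setting $\lambda_j^{\text{free}}=0$ deletes the term $\int_0^T\lambda_j^{\text{free}}f_j(H)\,dt$ from the action, which is precisely the situation in which the constraint $f_j(H)=0$ is absent. Since enlarging the feasible set of Hamiltonians (removing constraints) can only lower or preserve the global minimum time, and the original constrained optimum remains a local extremal of the relaxed problem, the conclusion follows immediately---no dynamical analysis of $U(t)\ket{\psi_i}$ is needed.
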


\begin{proof}
The proof of the theorem is rather straightforward: Setting the free
Lagrange multipliers to zero is essentially equivalent to removing
the corresponding constraints. This can be easily shown by noting
that $\lambda_{j}^{\text{free}}=0$ implies that the term $\int_{0}^{T}\lambda_{j}^{\text{free}}f_{j}(H(t))dt$
vanishes in the constraint action, which is effectively equivalent
to the case in which the constraint $f_{j}(H(t))=0$ is absent. 

Furthermore, we observe that the global minimum-time trajectory for
the case containing more constraints is also the locally time-extremal
trajectory in the case in which some of the constraints are removed.
Therefore, setting the free Lagrange multipliers to be zero
cannot increase the global minimum of the evolution time. 
\end{proof}
With the same arguments, one can easily deduce an analogous 
corollary for the constrained Lagrange multipliers:
\begin{cor}
\label{corr:constrained-Lag}Setting the constrained Lagrange multipliers
in Eqs.~(\ref{eq:H-closed})-(\ref{eq:U-closed}) be zero, but still
preserves the norm constraint~(\ref{eq:norm-constraint}) (at $t=0$)
and the boundary constraints~(\ref{eq:final-BC}), (\ref{eq:BC-closed}),
will not increase the globally minimum-time.
\end{cor}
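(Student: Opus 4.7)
The plan is to run the same argument as in Theorem~\ref{thm:free-Lag} and appeal to the monotonicity of constrained minimization, while carefully handling the feasibility hypothesis specific to constrained multipliers. The distinction between free and constrained Lagrange multipliers introduced before Theorem~\ref{thm:free-Lag} is purely organizational: it reflects whether $\lambda_j$ can be chosen independently or is pinned down in terms of the $\mu_j$ through the initial compatibility structure~(\ref{eq:F-structure}). Within the CHKO action, however, each $\lambda_j$ enters only through the term $\int_0^T \lambda_j(t)\,f_j(H(t))\,dt$, so algebraically, setting $\lambda_j\equiv 0$ is equivalent to discarding the constraint $f_j(H(t))=0$ from the variational problem, regardless of the free or constrained label.

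Once this equivalence is recognized, I would invoke the standard monotonicity of constrained optimization: dropping a constraint enlarges the feasible set of admissible Hamiltonians, so the minimum of the time functional over the enlarged set cannot exceed the minimum over the original set. The hypothesis that the norm constraint~(\ref{eq:norm-constraint}) at $t=0$ and the boundary conditions~(\ref{eq:final-BC}) and~(\ref{eq:BC-closed}) are preserved after zeroing the constrained multipliers is exactly what guarantees that the reduced problem remains a well-posed QB problem of the same class, so that the time-extremal trajectory generated by Theorem~\ref{thm:new-class} with the original multipliers is still an admissible (though not necessarily time-extremal) trajectory in the reduced problem. The inequality between the two global minima then follows immediately.

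The main obstacle, which is more delicate than in the free case, lies in ensuring that the hypothesis can in fact be enforced: zeroing a constrained multiplier generically re-couples back into the algebraic relations between the $\mu_j$ and the $\lambda_j(0)$ induced by~(\ref{eq:F-structure}), and into the norm condition~(\ref{eq:mu-j-norm}). I would verify that, possibly after a rescaling through the symmetry~(\ref{eq:sym-a})-(\ref{eq:sym-b}), the initial Hamiltonian $H(0)$ entering Eqs.~(\ref{eq:H-closed})-(\ref{eq:U-closed}) still obeys both~(\ref{eq:mu-j-norm}) and the structural requirement~(\ref{eq:F-structure}) after the reduction. This is precisely what the hypothesis precludes from failing, so the monotonicity conclusion carries through verbatim from the proof of Theorem~\ref{thm:free-Lag}, and the globally minimum time cannot increase.
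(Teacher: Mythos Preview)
Your proposal is correct and follows essentially the same approach as the paper. The paper does not give a separate proof of this corollary; it simply states that ``with the same arguments'' as Theorem~\ref{thm:free-Lag} one deduces the result, and your write-up is precisely that argument (setting $\lambda_j=0$ is tantamount to dropping the constraint, and dropping constraints can only enlarge the feasible set and hence cannot increase the global minimum), with the added care of spelling out why the feasibility hypothesis on the norm and boundary constraints is needed in the constrained-multiplier case.
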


We next illustrate the application of these theorems in representative
examples.

\subsection{$M=1$}

In the case of $M=1$, we find $\mathscr{X}_{11}=0$, which of course
forms a trivial subalgebra. So the dynamics of the Lagrangian multiplier
can be trivially solved, i.e., $\lambda_{0}(t)=\lambda_{0}(0)\equiv\lambda_{0}$
and $\lambda_{1}(t)=\lambda_{1}(0)\equiv\lambda_{1}$. Eqs.~(\ref{eq:H-closed})-(\ref{eq:U-closed})
become
\begin{align}
H(t) & =\exp\left[\text{i}\lambda_{1}\mathcal{X}_{1}t\right]H(0)\exp\left[-\text{i}\lambda_{1}\mathcal{X}_{1}t\right],\label{eq:M1-H}\\
U(t) & =\exp\left[\text{i}\lambda_{1}\mathcal{X}_{1}t\right]\exp\left[-\text{i}\left(H(0)+\lambda_{1}\mathcal{X}_{1}\right)t\right].\label{eq:M1-U}
\end{align}
Without loss of generality one can choose $\lambda_{0}=1$. As we
have mentioned, thanks to Eqs.~(\ref{eq:sym-a})-(\ref{eq:sym-b}),
$\lambda_{1}$ and $\lambda_{0}$ can be  renormalized to $\tilde{\lambda_{1}}$
and $\tilde{\lambda}_{0}$, respectively, in order to satisfy Eq.~(\ref{eq:Re-BC})
without changing $H(t)$ and $U(t)$. $\tilde{F}(t)$ is computed
from the normalized Lagrange multipliers $\tilde{\lambda}_{j}(t)$
while $F(t)$ is computed from the unnormalized Lagrange multipliers
$\lambda_{j}(t)$. 

Eqs.~(\ref{eq:M1-H})-(\ref{eq:M1-U}) generalize the restricted
in example for two-level system by CHKO~\citep{carlini2006timeoptimal}
to $N$-level systems. One can take $\mathcal{X}_{1}=\sigma_{z}$,
which is the example presented in~\citep{carlini2006timeoptimal}.
Following the Step $1$ in the recipe in Sec.~\ref{sec:governing-solution},
we consider 
\begin{align}
H(0) & =\mu_{x}\sigma_{x}+\mu_{y}\sigma_{y},\\
F(0) & =\mu_{x}\sigma_{x}+\mu_{y}\sigma_{y}+\lambda_{1}\sigma_{z}.
\end{align}
For the initial state $\ket{\psi_{i}}=\ket{e_{1}}=\ket{+x}$
and $\ket{e_{2}}=\ket{\psi_{f}^{\perp}}=\ket{-x}$,  Eq.~(\ref{eq:CHKO-initial})
implies that 
\begin{align}
\braket{e_{1}\big|F(0)\big|e_{1}} & =-\braket{e_{2}\big|F(0)\big|e_{2}}=\mu_{x}=0.
\end{align}
Eq.~(\ref{eq:mu-j-norm}) implies that $\mu_{y}=\omega$. Thus Eqs.~(\ref{eq:M1-H})-(\ref{eq:M1-U})
become 
\begin{align}
H(t) & =\omega\exp\left[\text{i}\lambda_{1}\sigma_{z}t\right]\sigma_{y}\exp\left[-\text{i}\lambda_{1}\sigma_{z}t\right],\label{eq:M1-H-2level}\\
U(t) & =\exp\left[\text{i}\lambda_{1}\sigma_{z}t\right]\exp\left[-\text{i}\left(\omega\sigma_{y}+\lambda_{1}\sigma_{z}\right)t\right].\label{eq:M1-U-2level}
\end{align}
 Reference~\citep{carlini2006timeoptimal} assumes
that after the renormalization of $\lambda_{1}$ and $\lambda_{0}$,
all the pairs of $(\tilde{\lambda}_{1},\,T)$ would give rise to a
locally time-extremal trajectory between the initial state $\ket{\psi_{i}}$
and $U(T)\ket{\psi_{i}}$ as Eq.~(\ref{eq:IM-BC}) is not taken into
account. As one can see from Fig.~\ref{fig:M1}(b), there are pairs
of $(\lambda_{1},\,T)$ that violate Eq.~(\ref{eq:IM-BC}). These
trajectories, that satisfy the CHKO equation~(\ref{eq:CHKO})-(\ref{eq:CHKO-initial}),
the constraints~(\ref{eq:traceless})-(\ref{eq:term-constraint}) and
the boundary condition~(\ref{eq:final-BC}),  are not, even locally,
extremal trajectories. This aspect was ignored in Ref.~\citep{carlini2006timeoptimal}
for this simple case with $N=2$.
\begin{figure}
\begin{centering}
\includegraphics[scale=0.23]{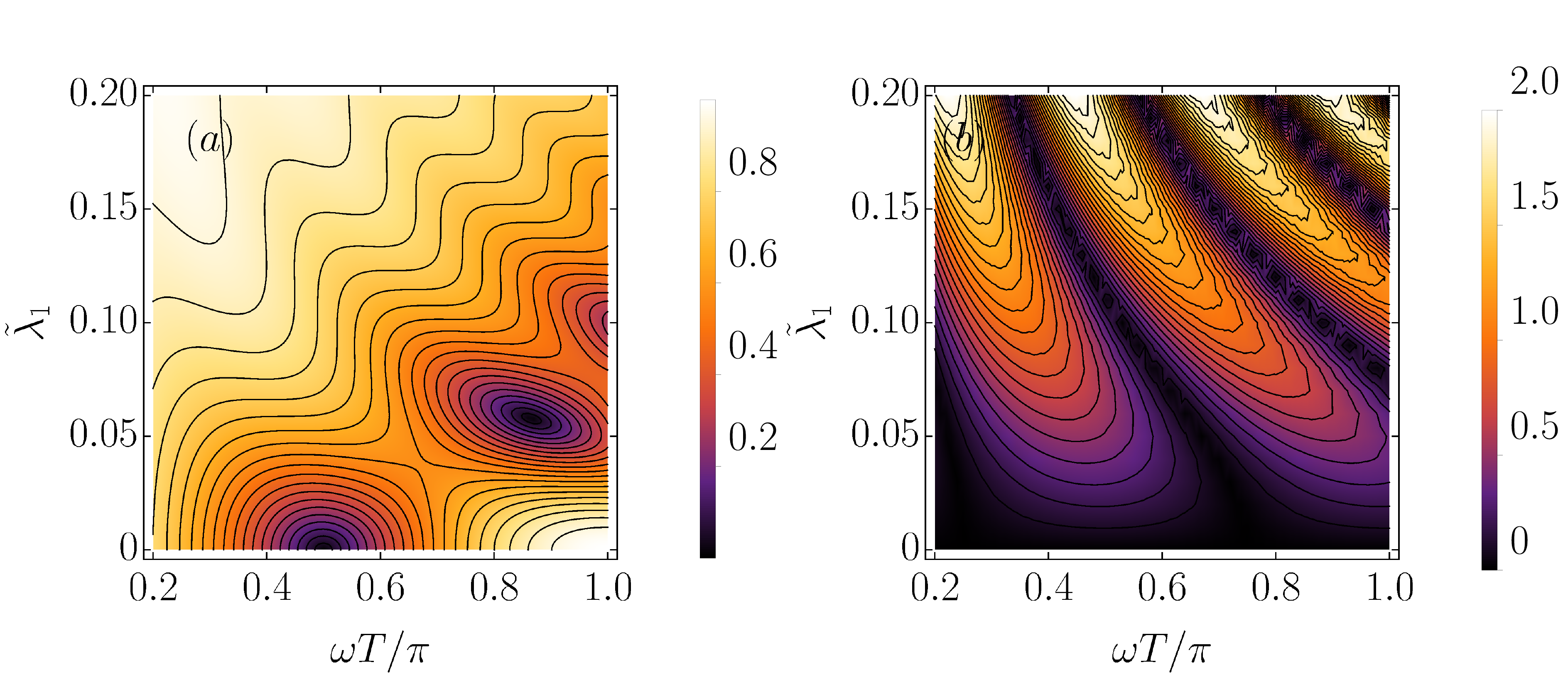}
\par\end{centering}
\caption{\label{fig:M1}Figs. (a) and (b) show the contour plots of amplitude
$|\braket{\psi_{f}\big|\psi_{i}}|$ and $\text{Im}\braket{\psi_{f}\big|H(T)\tilde{F}(T)\big|\psi_{f}}$
versus the values of $\tilde{\lambda}_{1}$ and $T$, respectively.
Value of parameter $\omega=10$. In this particular case, we note
$\text{Re}\braket{\psi_{f}\big|H(T)\tilde{F}(T)\big|\psi_{f}}=\omega^{2}$
so that $\tilde{\lambda}_{j}=\lambda_{j}/\omega^{2}$. At each point,
$\tilde{\lambda}_{1}=\lambda_{1}/\text{Re}\braket{\psi_{f}\big|H(T)F(T)\big|\psi_{f}}$.
Only pairs of $(\tilde{\lambda}_{1},\,T)$ in the black regions on
(b) satisfy Eq.~(\ref{eq:IM-BC}) and therefore represents a locally
time-extremal trajectory. For any other pairs of $(\tilde{\lambda}_{1},\,T)$
in other regions of (b), although they satisfy all the equations of
QB problem except Eq.~(\ref{eq:IM-BC}), they do not represent the
time-extremal trajectory. }
\end{figure}

One can readily calculate that in the basis of $\{\ket{0},\,\ket{1}\}$,
\begin{equation}
\ket{\psi(T)}=\frac{1}{\sqrt{2}}\begin{pmatrix}e^{\text{i}\lambda_{1}T}\cos(\Lambda_{1}T)-e^{\text{i}\lambda_{1}T}e^{\text{i}\theta_{1}}\sin(\Lambda_{1}T)\\
e^{-\text{i}\lambda_{1}T}\cos(\Lambda_{1}T)+e^{\text{i}\theta_{1}-\text{i}\lambda_{1}T}\sin(\Lambda_{1}T)
\end{pmatrix},
\end{equation}
where 
\begin{equation}
\Lambda_{1}\equiv\sqrt{\lambda_{1}^{2}+\omega^{2}},\,\cos\theta_{1}\equiv\omega/\Lambda_{1},\,\sin\theta_{1}=\lambda_{1}/\Lambda_{1}.\label{eq:Lambda-sin-def}
\end{equation}
A straightforward calculation of the l.h.s. of Eq.~(\ref{eq:BC-closed})
implies that 
\begin{equation}
\lambda_{1}\cos(2\Lambda_{1}T)=0.\label{eq:BC-1qubit}
\end{equation}

According to Theorem~\ref{thm:non-trivial},  in this case
$\braket{\psi_{f}^{\perp}\big|\mathcal{X}_{j}\big|\psi_{i}}=\braket{-x\big|\sigma_{z}\big|+x}=1$.
Thus, if $\phi=2m\pi,\,\pi+2m\pi$ with $m\in \mathbb{Z}$, the constraint has no effect.
This is consistent with the fact that setting $\lambda_{1}=0$ and
equating $\ket{\psi(T)}\sim\ket{\psi_{f}}$ will lead to $\phi=2m\pi,\,\pi+2m\pi$.
In this case, Eq.~(\ref{eq:BC-1qubit}) is trivially satisfied as
one can see from Fig.~\ref{fig:M1}~(b). The genuine restricted
evolution with $\lambda_{1}\neq0$ has not been discussed before. Yet, 
this is the simplest case where there is only one restricted operator.
When we have freedom to choose the final state $\ket{\psi_{f}}$,
any choice of $\lambda_{1}\neq0$ such that $\cos(2\Lambda_{1}T)=0$
will generate a local-time extremal trajectory between $\ket{\psi_{i}}$
and $\ket{\psi_{f}}=\ket{\psi(T)}$.

What if the final state is a priori
known? With Eq.~(\ref{eq:psi-ref}), in the basis of $\{\ket{0},\,\ket{1}\}$,
the final state can be parameterized as follows:
\begin{equation}
\ket{\psi_{f}}=\frac{1}{\sqrt{2}}\begin{pmatrix}e^{\text{i}\phi}\cos\Omega_{\text{B}}+1\sin\Omega_{\text{B}}\\
e^{\text{i}\phi}\cos\Omega_{\text{B}}-\sin\Omega_{\text{B}}
\end{pmatrix}.
\end{equation}
Now we would like to satisfy both Eq.~(\ref{eq:BC-closed}) and Eq.~(\ref{eq:final-BC}).
Equation~(\ref{eq:BC-1qubit}) implies that
\begin{equation}
\Lambda_{1}T=\frac{\pi}{4}+\frac{k\pi}{2},\,k\in\mathbb{N}.\label{eq:Lambda1T}
\end{equation}
To satisfy Eq.~(\ref{eq:final-BC}), we would like to have $\rho(T)=\rho_{f}$,
where $\rho(T)=\ket{\psi(T)}\bra{\psi(T)}$ and $\rho_{f}=\ket{\psi_{f}}\bra{\psi_{f}}$.
Using Eq.~(\ref{eq:Lambda1T}), this leads to 
\begin{align}
(-1)^{k}\cos\theta_{1} & =-\cos\phi\sin(2\Omega_{\text{B}}),\label{eq:diag}\\
(-1)^{k}\sin\theta_{1}\sin(2\lambda_{1}T) & =\cos(2\Omega_{\text{B}}),\label{eq:off-diag1}\\
(-1)^{k}\sin\theta_{1}\cos(2\lambda_{1}T) & =\sin\phi\sin(2\Omega_{\text{B}}).\label{eq:off-diag2}
\end{align}
Our goal now is to find a solution for $\lambda_{1}$ and $T$ so that
Eqs.~(\ref{eq:Lambda1T}-\ref{eq:off-diag2}) are satisfied consistently.
Taking the ratio between Eq.~(\ref{eq:off-diag1}) and Eq.~(\ref{eq:off-diag2}),
we find 
\begin{equation}
\lambda_{1}T=\frac{1}{2}\left[\text{arccot}\left(\sin\phi\tan(2\Omega_{\text{B}})\right)+l\pi\right],\,l\in\mathbb{Z}.\label{eq:lam1T}
\end{equation}
Substituting Eq.~(\ref{eq:lam1T}) into Eqs.~(\ref{eq:Lambda-sin-def}), (\ref{eq:Lambda1T}),
we find
\begin{align}
T(k,\,l) & =\frac{1}{|\omega|}\sqrt{\left(\frac{\pi}{4}+\frac{k\pi}{2}\right)^{2}-\frac{1}{4}\left(\text{arccot}\left[\sin\phi\tan(2\Omega_{\text{B}})\right]+l\pi\right)^{2}},\label{eq:T} \\
\sin\theta_{1} & =\frac{\text{arccot}\left[\sin\phi\tan(2\Omega_{\text{B}})\right]/2+l\pi/2}{\pi/4+k\pi/2}.
\end{align}
Clearly, $k\in\mathbb{N}$ and $l\in\mathbb{Z}$ must satisfy the
constraint that the expression under the square root on the r.h.s. of Eq.~\eqref{eq:T} is positive. Finally, we note that with Eq.~(\ref{eq:Lambda1T}),
Eq.~(\ref{eq:off-diag1}) implies Eq.~(\ref{eq:off-diag2}) and
Eq.~(\ref{eq:diag}) or vice versa. Therefore, the pair $(k,\,l)$
also needs to satisfy Eq.~(\ref{eq:off-diag1}), which leads to the
following compatibility condition:
\begin{equation}
\frac{2\left(\text{arccot}\left[\sin\phi\tan(2\Omega_{\text{B}})\right]/\pi+l\right)}{(1+2k)\sqrt{1+[\cot(2\Omega_{\text{B}})]^{2}(\text{csc}\phi)^{2}}}=\sin(2\Omega_{\text{B}})\sin\phi,
\label{eq:Compatibility}
\end{equation}
where $\phi \neq 2m\pi,\,\pi+2m\pi$ with $m\in\mathbb{Z}$ since the free
evolution is excluded. The global minimum time can be obtained by finding 
$\min_{k,\,l}T(k,\,l)$ such that the pair $(k,\,l)$ satisfies Eq.~(\ref{eq:Compatibility})
with $k\in\mathbb{N}$, $l\in\mathbb{Z}$. 

\subsection{\label{subsec:multiple-constraints}A two-qubit example with multiple
constraints }

In this section, we consider an example in which the set of restricted operators
contain more than one operators and form a Lie sub algebra. For the
sake of simplicity, we shall index the Pauli operators by number $1,2,3$
rather than $x,\,y,\,z$. Let us consider a two-qubit example, where
\begin{equation}
\{\mathcal{X}_{j}\}\in\{\sigma_{1}^{1}\sigma_{2}^{\alpha},\,\sigma_{1}^{\alpha}\sigma_{2}^{1},\,\sigma_{i}^{\alpha}\},\,\alpha=1,\,2,3,\,i=1,\,2.
\end{equation}
In this case,  both single-qubit operations and any two-qubit operations involving
$\sigma_{i}^{1}$($\sigma_{i}^{x}$) operation on any of the qubits
are forbidden. Explicit computation yields
\begin{align}
[\sigma_{1}^{1}\sigma_{2}^{\alpha},\,\sigma_{1}^{\beta}\sigma_{2}^{1}] & =0,\,\,\alpha,\,\beta=2,\,3,\\
\text{i}[\sigma_{1}^{1}\sigma_{2}^{2},\,\sigma_{1}^{1}\sigma_{2}^{3}] & =-2\sigma_{2}^{1},\\
\text{i}[\sigma_{1}^{3}\sigma_{2}^{1},\,\sigma_{1}^{2}\sigma_{2}^{1}] & =2\sigma_{1}^{1},\\
\text{i}[\sigma_{1}^{1}\sigma_{2}^{\alpha},\,\sigma_{1}^{1}\sigma_{2}^{1}] & =\text{2}\epsilon_{\alpha\beta1}\sigma_{2}^{\beta},\,\alpha,\,\beta=2,\,3,\\
\text{i}[\sigma_{1}^{\alpha}\sigma_{2}^{1},\,\sigma_{1}^{1}\sigma_{2}^{1}] & =\text{2}\epsilon_{\alpha\beta1}\sigma_{1}^{\beta},\,\alpha,\,\beta=2,\,3.
\end{align}
Thus,  $\text{span}\{\mathcal{X}_{j}\}$ forms a closed subalgebra
of $su(4)$ and the Lagrange multipliers are constants.
We simply denote $\lambda_{j}\equiv\lambda_{j}(0)$. Once again, we
can choose $\lambda_{0}=1$ for the ease of calculation. As we have
mentioned previously, one can always renormalize $\lambda_{j}$'s
according to Eqs.~(\ref{eq:sym-a})-(\ref{eq:sym-b}) such that (\ref{eq:Re-BC})
is satisfied while keeping $H(t)$ and $U(t)$ unchanged. Theorem~\ref{thm:new-class}
gives then  the optimal Hamiltonian  
\begin{align}
H(t) & =\sum_{(\alpha\beta)\in Y}\mu_{\alpha\beta}e^{\text{i}\sum_{(\gamma\delta)\in X}\lambda_{\gamma\delta}\sigma_{1}^{\gamma}\sigma_{2}^{\beta}t}\sigma_{1}^{\alpha}\sigma_{2}^{\beta}e^{-\text{i}\sum_{(\gamma\delta)\in X}\lambda_{\gamma\delta}\sigma_{1}^{\gamma}\sigma_{2}^{\beta}t},\label{eq:H-2qu}\\
U(t) & =e^{\text{i}\sum_{(\gamma\delta)\in X}\lambda_{\gamma\delta}\sigma_{1}^{\gamma}\sigma_{2}^{\beta}t}e^{-\text{i}[\sum_{(\alpha\beta)\in Y}\mu_{\alpha\beta}\sigma_{1}^{\alpha}\sigma_{2}^{\beta}+\sum_{(\gamma\delta)\in X}\lambda_{\gamma\delta}\sigma_{1}^{\gamma}\sigma_{2}^{\beta}]t},\label{eq:U-2qu}
\end{align}
where 
\begin{align}
X & =\{10,\,20,\,30,\,01,\,01,\,02,\,03,\\
 & 11,\,12,\,21,\,13,\,31\},\nonumber \\
Y & =\{22,\,33,\,23,\,32\}.
\end{align}

As we have mentioned, due to Eq.~(\ref{eq:CHKO-initial}), $\lambda_{\gamma\delta}$
and $\mu_{\alpha\beta}$ are not independent. For example, if
we start with the ground state $\ket{e_{1}}=\ket{\psi_{i}}=\ket{11}$
and $\ket{e_{2}}=\ket{\psi_{f}^{\perp}}=\ket{00}$ so that the final
state is 
\begin{equation}
\ket{\psi_{f}}=\cos\Omega_{\text{B}}e^{\text{i}\phi}\ket{00}+\sin\Omega_{\text{B}}\ket{11}.\label{eq:2qu-psif}
\end{equation}
Among all the restricted operators, we already find
\begin{equation}
\braket{11\big|\sigma_{1}^{1}\sigma_{2}^{1}\big|00}=1,\,\braket{11\big|\sigma_{1}^{1}\sigma_{2}^{2}\big|00}=\braket{11\big|\sigma_{1}^{2}\sigma_{2}^{1}\big|00}=\text{i}.
\end{equation}
According to Theorem~\ref{thm:non-trivial}, regardless the values
of $\phi$, the optimal time-evolution is restricted. We choose
the remaining basis as $\ket{e_{3}}=\ket{10}$ and $\ket{e_{4}}=\ket{01}$. 

Furthermore, we show in Appendix~\ref{sec:Initial-form} that Eq.~(\ref{eq:CHKO-initial})
leads to the following constraints among the coefficients.
\begin{align}
\mu_{33} & =0,\, & \lambda_{30}=\lambda_{03} & =0,\label{eq:2qu-coeff1}\\
\mu_{23}+\lambda_{20} & =0,\, & \lambda_{13}+\lambda_{10} & =0,\label{eq:2qu-coeff2}\\
\mu_{32}+\lambda_{02} & =0,\, & \lambda_{31}+\lambda_{01} & =0,\label{eq:2qu-coeff4}\\
\mu_{22}+\lambda_{11} & =0,\, & \lambda_{12}-\lambda_{21} & =0.\label{eq:2qu-coeff6}
\end{align}
Eq.~(\ref{eq:mu-j-norm}) indicates that 
\begin{equation}
\mu_{23}^{2}+\mu_{32}^{2}+\mu_{22}^{2}=\frac{\omega^{2}}{2}.
\end{equation}

Thus, the free parameters are $\lambda_{13}$ $\lambda_{31}$,
$\lambda_{12}$ and $\lambda_{21}$. According
to Theorem~\ref{thm:free-Lag}, they can be set to be zero, provided
the boundary constraints~(\ref{eq:final-BC}),~(\ref{eq:BC-closed})
are satisfied, which will be manifestly true later but is assumed
for now. The calculation can be further reduced. Among the
three Lagrange multipliers left, we choose $\lambda_{11}\neq0$ with
$\lambda_{20}=\lambda_{02}=0$. If such a choice can satisfy boundary
constraints~(\ref{eq:final-BC}),~(\ref{eq:BC-closed}), according
to Corollary~\ref{corr:constrained-Lag}, it provides the
globally minimum-time trajectory. The reason we do not set $\lambda_{11}=0$
is that it will result in a local Hamiltonian  in the qubits that
cannot general nonlocal evolution. Therefore, an entangled state cannot
be generated and Eq.~(\ref{eq:final-BC}) is violated. With this choice,
Eqs.~(\ref{eq:H-2qu})-(\ref{eq:U-2qu}) become
\begin{align}
H & =\frac{\omega}{\sqrt{2}}\sigma_{1}^{2}\sigma_{2}^{2},\\
U(t) & =\exp\left[-\frac{\text{i}\omega t}{\sqrt{2}}\sigma_{1}^{2}\sigma_{2}^{2}\right],
\end{align}
with 
\begin{equation}
G=-\frac{\omega}{\sqrt{2}}\sigma_{1}^{1}\sigma_{2}^{1},
\end{equation}
where we have used the fact that $[\sigma_{1}^{1}\sigma_{2}^{1},\,\sigma_{1}^{2}\sigma_{2}^{2}]=0$.
To satisfy that the condition $\ket{\psi_{f}^{\perp}}=\ket{00}$,
we must have
\begin{align}
\braket{e_{3}\big|U(T)\big|\psi_{i}} & =0,\label{eq:e3-comp}\\
\braket{e_{4}\big|U(T)\big|\psi_{i}} & =0.\label{eq:e4-comp}
\end{align}
A straightforward application of the Baker-Campell-Hausdorff formula
shows that Eqs.~(\ref{eq:e3-comp})-(\ref{eq:e4-comp}) is satisfied. The time evolution reads 
\begin{equation}
U(T)\ket{\psi_{i}}=\cos(\omega T/\sqrt{2})\ket{11}+\text{i}\sin(\omega T/\sqrt{2})\ket{00}.
\end{equation}
In this case, Eq.~(\ref{eq:BC-closed}) is trivially satisfied.
Upon setting $\ket{\psi_{f}}\sim U(T)\ket{\psi_{i}}$, which leads
to 
\[
\phi=-\frac{\pi}{2}+2k\pi,\,k\in\mathbb{Z},
\]
the boundary condition~(\ref{eq:final-BC}) is also  satisfied.
Therefore, we find the globally minimum-time evolution is 
\begin{equation}
T=\frac{\sqrt{2}\Omega_{\text{B}}}{\omega}.
\end{equation}
In comparison to the free evolution, the constraints make the time-optimal
evolution longer by a factor $\sqrt{2}$.

Similar calculations can be done for other initial and final states,
e.g., $\ket{\psi_{i}}=\ket{01}$ and $\ket{\psi_{f}^{\perp}}=\ket{10}$,
$\ket{\psi_{i}}=\ket{+x,\,+x}$ and $\ket{\psi_{f}^{\perp}}=\ket{-x,\,-x}$
, $\ket{\psi_{i}}=\ket{+x,\,-x}$ and $\ket{\psi_{f}^{\perp}}=\ket{-x,\,+x}$,
etc. We would like to remark that although we only give two analytic examples,
the class where restricted operators form
a closed Lie algebra is rich. In particular, we expect  this class
to contain analytically solvable instances of the QB problem in many-body restricted
Hamiltonians, of relevance to the study of quantum
speed limits in many-body quantum systems~\citep{bukov2019geometric,delcampo2021probing}. 

\section{\label{sec:conclusion}Conclusion}

In summary, we argued that unlike the classical brachistochrone
problem, the final boundary condition in the QB problem should be
considered as movable according to the $U(1)$ gauge transformation
and the QB should be solved by variational calculus with movable boundary
conditions. The effect of the movable endpoint introduces an additional
constraint, unrecognized in the original formulation of QB by CHKO~\citep{carlini2006timeoptimal} and ensuing literature.
Furthermore, we have also provided an alternative derivation of the
QB equations based on the proper observation of the boundary condition.
An advantage of the current approach is that it requires much less effort than the original derivation
by CHKO. 

Using it, we have reported a general expression for the optimal Hamiltonian and optimal
unitary evolution operator and derived the governing equation for
the dynamics of the Lagrange multipliers in the QB problem. We have also proposed
a numerical algorithm that generates time-extremal trajectories,
taking into account the additional constraint at the final time.
Furthermore, we have identified an important class of analytically solvable
examples of the QB problem where the restricted operators form a closed
Lie algebra. In this case, the Lagrange multipliers become constants
and the optimal Hamiltonian and optimal unitary operators take a simple
form. This opens up the possibility to study QB in many-body systems.
We have illustrated with specific examples that the effect of the moving
endpoint cannot be ignored in general. Indeed, doing so can lead to an erroneous
identification of the time-extremal trajectories. 

Our results here open the door to investigate the geometry of the evolution of many-body
quantum systems. Many questions are open based on our results here, such as
the combination of  the recipe for generating time-extremal trajectories
proposed here with other algorithms to develop a full numerical framework to
solve the CHKO equation, the study of the QB in many-body
quantum systems combined with many-body techniques, and the application of the analytical
findings reported here to the optimal generation of a target quantum gate. 

\section{Acknowledgement}

It is a pleasure to acknowledge discussions with Niklas H\"ornedal,
Aritra Kundu, Shengshi Pang, Kazutaka Takahashi and Hongzhe Zhou.

\appendix
\begin{widetext}

\section{\label{sec:equivalence}The equivalence between Eq.~(\ref{eq:CHKO-psi})
and Eq.~(\ref{eq:CHKO})}

The solution to Eq.~(\ref{eq:CHKO}) is 
\begin{equation}
F(t)=U(t)F(0)U^{\dagger}(t),
\end{equation}
where $U(t)$ is the unitary evolution generated by $H(t)$. One can
easily check that $F(t)$  satisfies Eq.~(\ref{eq:CHKO-psi}). Let us proof
the converse. Obviously, the second equation of Eq.~(\ref{eq:CHKO-psi})
implies the initial condition in Eq.~(\ref{eq:CHKO}). Our goal now
is to prove the differential equation Eq.~(\ref{eq:CHKO}). Equation~(\ref{eq:CHKO-psi})
is equivalent to
\begin{align}
\left\{ \dot{F}(t)+\text{i}[H(t),\,F(t)]\right\} \mathcal{P}(t) & =0,\label{eq:CHKOpsi-Pt}\\
U^{\dagger}(t)\{F(t),\,\mathcal{P}(t)\}U(t) & =\bar{F}(t),\label{eq:CHKOH-Fbar}
\end{align}
where
\begin{equation}
\bar{F}(t)\equiv U^{\dagger}(t)F(t)U(t),\label{eq:Fbar-def}
\end{equation}
and $\mathcal{P}(t)=U(t)\mathcal{P}(0)U(t)$ with $\mathcal{P}(0)=\ket{\psi_{i}}\bra{\psi_{i}}$. 

Eq.~(\ref{eq:CHKOpsi-Pt}) can be further rewritten as 
\begin{equation}
U^{\dagger}(t)\left\{ \dot{F}(t)+\text{i}[H(t),\,F(t)]\right\} U(t)\mathcal{P}(0)=0.\label{eq:CHKOpsi-P0}
\end{equation}
According to Eq.~(\ref{eq:Fbar-def}), it can be readily obtained
\begin{equation}
\dot{\bar{F}}(t)=\text{i}U^{\dagger}(t)[H(t),\,F(t)]U(t)+U^{\dagger}(t)\dot{F}(t)U(t).\label{eq:Fbar-dot}
\end{equation}
Substituting Eq.~(\ref{eq:Fbar-dot}) into Eq.~(\ref{eq:CHKOpsi-P0}) yields 
\begin{equation}
\dot{\bar{F}}(t)\mathcal{P}(0)=0,
\end{equation}
from which we conclude that $\bar{F}(t)\mathcal{P}(0)$ is a constant of
motion, that is 
\begin{equation}
\bar{F}(t)\mathcal{P}(0)=\bar{F}(0)\mathcal{P}(0).\label{eq:CHKOpsi-Fbar-P}
\end{equation}
 Furthermore, we note the relations
\begin{align}
U^{\dagger}(t)F(t)\mathcal{P}(t)U(t) & =\bar{F}(t)\mathcal{P}(0),\\
U^{\dagger}(t)\mathcal{P}(t)F(t)U(t) & =\mathcal{P}(0)\bar{F}(t).
\end{align}
Using them Eq.~(\ref{eq:CHKOH-Fbar}) can be rewritten as 
\begin{equation}
\{\bar{F}(t),\,\mathcal{P}(0)\}=\bar{F}(t).\label{eq:Fbar-P0-anticomm}
\end{equation}
Combining Eq.~(\ref{eq:CHKOpsi-Fbar-P}) with Eq.~(\ref{eq:Fbar-P0-anticomm}),
it follows that 
\begin{equation}
\bar{F}(t)=\bar{F}(0)\mathcal{P}(0)+\mathcal{P}(0)\bar{F}(0),
\end{equation}
which is a constant of motion. As a result, $\dot{\bar{F}}(t)=0$ leads to Eq.~(\ref{eq:CHKO})
in the main text. 

\section{\label{sec:Rederiving-CHKO-gate}Rederiving the CHKO equation for
quantum gate implementation with minimum efforts}

In Ref.~\citep{carlini2007timeoptimal}, to study the QB
equation for quantum gates, CHKO constructed the following action:
\begin{equation}
S_{\text{CHKO}}(\ket{\psi},\,H,\,\ket{\phi},\,\lambda_{j})=\sum_{\alpha=\text{T},\,\text{S},\,\text{C}}\int_{0}^{T}L_{\text{\ensuremath{\alpha} }}dt,\label{eq:S-CHKO-U}
\end{equation}
where the time, Schr\"odinger, and the constraint Lagrangians are
defined as
\begin{align}
L_{\text{T}} & =\frac{\sqrt{g_{tt}}}{v(t)},\label{eq:LT-def-U}\\
L_{\text{\ensuremath{\text{S}}}} & =\text{Tr}\left(\Lambda(t)[\text{i}\dot{U}(t)U^{\dagger}(t)-H(t)]\right),\label{eq:LS-def-U}\\
L_{\text{C}} & =\sum_{j}\lambda_{j}(t)f_{j}(H(t)),\label{eq:LC-def-U}
\end{align}
with the boundary conditions 
\begin{align}
U(0) & =\mathbb{I},\label{eq:U-initial}\\
U(t_{f}) & \sim U_{f}.\label{eq:U-final}
\end{align}
Again, one can always make the Hamiltonian traceless, i.e., $\text{Tr}[H(t)]=0$. Further,
\begin{equation}
g_{tt}\equiv\text{Tr}[\dot{U}^{\dagger}(t)\dot{U}(t)]+\frac{1}{N}\left(\text{Tr}[\dot{U}^{\dagger}(t)U(t)]\right)^{2}\label{eq:gtt-U}
\end{equation}
is the metric on the manifold of quantum unitary matrices induced by the
Hilbert space norm which is invariant under $U(1)$-gauge transformation.
CHKO computed the speed of $v(t)$ by substituting the Schr\"odinger
equation into Eq.~(\ref{eq:gtt-U}) and obtained 
\begin{equation}
v(t)=\sqrt{g_{tt}}=\sqrt{\text{Tr}[H^{2}(t)]}.
\end{equation}
As we have discussed in the main text, to derive the CHKO equation,
there is no need to perform the variational calculus with respect
to $L_{\text{T}}$. One can directly set $\delta S_{\text{T}}=0$.
On the other hand, since $S_{\text{C}}$ is independent of $U$, it can be readily found that
$\delta S_{\text{C}}=0$ under the the variation of $U$. Next, we would like to compute $\delta S_{\text{S}}$. 

For the Lagrangian $L=L(U,\,\dot{U},\,U^{\dagger},\,\dot{U}^{\dagger})$, minimization of the action yields
\begin{align}
\delta\int_{0}^{T}Ldt & =\int_{0}^{T}dt\text{Tr}\left(\frac{\partial L}{\partial U}\delta U+\frac{\partial L}{\partial\dot{U}}\delta\dot{U}+\frac{\partial L}{\partial U^{\dagger}}\delta U^{\dagger}+\frac{\partial L}{\partial\dot{U}^{\dagger}}\delta\dot{U}^{\dagger}\right)\nonumber \\
 & =\int_{0}^{T}dt\text{Tr}\left[\left(\frac{\partial L}{\partial U}-\frac{d}{dt}\frac{\partial L}{\partial\dot{U}}\right)\delta U+\left(\frac{\partial L}{\partial U^{\dagger}}-\frac{d}{dt}\frac{\partial L}{\partial\dot{U}^{\dagger}}\right)\delta U^{\dagger}\right]\nonumber \\
 & +\text{Tr}\left(\frac{\partial L}{\partial\dot{U}}\delta U\right)\bigg|_{t=0}^{t=T}+\text{Tr}\left(\frac{\partial L}{\partial\dot{U}^{\dagger}}\delta U^{\dagger}\right)\bigg|_{t=0}^{t=T}\,.
\end{align}
Note that due to $UU^{\dagger}=\mathbb{I}$, $\delta U^{\dagger}$
and $\delta U$ are related to each other as follows:
\begin{align}
\delta U^{\dagger} & =-U^{\dagger}\delta UU^{\dagger}.\label{eq:delU-dag}
\end{align}
Using Eq.~(\ref{eq:delU-dag}), we thus obtain
\begin{align}
\delta\int_{0}^{T}Ldt & =\int_{0}^{T}dt\text{Tr}\left[\left(\frac{\partial L}{\partial U}-\frac{d}{dt}\frac{\partial L}{\partial\dot{U}}-U^{\dagger}\frac{\partial L}{\partial U^{\dagger}}U^{\dagger}+U^{\dagger}\frac{d}{dt}\frac{\partial L}{\partial\dot{U}^{\dagger}}U^{\dagger}\right)\delta U\right]\nonumber \\
 & +\text{Tr}\left[\left(\frac{\partial L}{\partial\dot{U}}-U^{\dagger}\frac{\partial L}{\partial\dot{U}^{\dagger}}U^{\dagger}\right)\delta U\right]\bigg|_{t=0}^{t=T}\,.\label{eq:delS-general-varU}
\end{align}
Taking $L=L_{\text{S}}$ and applying the fixed boundary condition,
we arrive at

\begin{align}
\delta S_{\text{S}} & =-\text{i}\int_{0}^{T}dt\text{Tr}\left\{ U^{\dagger}(t)\left(\dot{\Lambda}(t)+\Lambda(t)\dot{U}(t)U^{\dagger}(t)+U(t)\dot{U}^{\dagger}(t)\Lambda(t)\right)\delta U(t)\right\} \nonumber \\
 & =-\text{i}\int_{0}^{T}dt\text{Tr}\left\{ U^{\dagger}(t)\left(\dot{\Lambda}(t)+\text{i}[H(t),\,\Lambda(t)]\right)\delta U(t)\right\} ,
\end{align}
where we have used $H(t)=\text{i}\dot{U}(t)U^{\dagger}(t)$. Therefore,
the Euler-Lagrange equation for $U(t)$ is 
\begin{equation}
\dot{\Lambda}(t)+\text{i}[H(t),\,\Lambda(t)]=0.\label{eq:Lambda-dfif}
\end{equation}
Keeping $U$ fixed, one can easily find the identities
\begin{align}
\delta S_{\text{S}} & =-\int_{0}^{T}\text{Tr}\left(\Lambda(t)\delta H(t)\right)dt,\\
\delta S_{\text{C}} & =\int_{0}^{T}\text{Tr}\left(F(t)\delta H(t)\right)dt.
\end{align}
The Euler-Lagrangian equation for $H(t)$ is 
\begin{equation}
\Lambda(t)=F(t).\label{eq:Lambda-F}
\end{equation}
Obviously, Eqs.~(\ref{eq:Lambda-dfif}),~(\ref{eq:Lambda-F}) imply
Eq.~(\ref{eq:CHKO}) in the main text.

\section{\label{sec:FullQB-gate}Derivation of the full quantum brachistochrone
equation for quantum gate implementation}

As with Eq.~(\ref{eq:varpsi}), one can show that
\begin{equation}
\tilde{U}(\tilde{T})-U(T)=\dot{U}(T)\delta T+\delta U(T).
\end{equation}
 The boundary condition~(\ref{eq:U-final}) again dictates that
\[
\tilde{U}(\tilde{T})=e^{\text{i}\delta\theta(T)}U(T),
\]
and we find 
\begin{equation}
\delta U(T)=-\dot{U}(T)\delta T+\text{i}\delta\theta(T)U(T).
\end{equation}
This equation is also discussed in Ref.~\citep{wakamura2020ageneral},
though the issue of the moving boundary is not explicitly mentioned
there. When varying $U(t)$ while keeping $H(t)$ fixed, we find 
\begin{align}
\delta S_{\text{S}} & =\int_{0}^{T+\delta T}L_{\text{S}}(U+\delta U)dt-\int_{0}^{T}L_{\text{S}}(U)dt\nonumber \\
 & =L_{\text{S}}(U)\delta T+\delta\int_{0}^{T}L_{\text{S}}(U+\delta U)dt-\int_{0}^{T}L_{\text{S}}(U)dt\nonumber \\
 & =L_{\text{S}}(U)\delta T+\text{i}\text{Tr}\left[U^{\dagger}(t)\Lambda(t)\delta U(t)\right]\big|_{t=0}^{t=T}-\text{i}\int_{0}^{T}dt\text{Tr}\left\{ U^{\dagger}(t)\left(\dot{\Lambda}(t)+\text{i}[H(t),\,\Lambda(t)]\right)\delta U(t)\right\}\,. 
\end{align}
Therefore, the relation
\begin{equation}
\delta S_{\text{CHKO}}=\delta T+\delta S_{\text{S}}=0
\end{equation}
 yields not only the CHKO equation~(\ref{eq:CHKO}), but also 
\begin{align}
\text{i}\text{Tr}\left[U^{\dagger}(T)\Lambda(T)\dot{U}(T)\right] & =1,\label{eq:U-add1}\\
\text{Tr}\left[U^{\dagger}(T)\Lambda(T)U(T)\right] & =0.\label{eq:U-add2}
\end{align}
With the Schr\"odinger equation and Eq.~(\ref{eq:Lambda-F}), Eq.~(\ref{eq:U-add1})
becomes 
\begin{equation}
\text{Tr}\left[H(T)F(T)\right]=1.
\end{equation}
Eq.~(\ref{eq:U-add2}) is satisfied automatically since $\text{Tr}[F(t)]=0$
at all times.

\section{\label{sec:Derivation-Governing-Eqs}Derivation of the governing
equations for the QB equation }

In the most general case, the constraints are $\text{Tr}[H(t)]=0$
and Eqs.~(\ref{eq:norm-constraint})-(\ref{eq:term-constraint})
in the main text. Therefore 
\begin{equation}
F(t)=\lambda_{0}(t)[H(t)+G(t)],\label{eq:F-restricted}
\end{equation}
where 
\begin{equation}
G(t)\equiv\frac{1}{\lambda_{0}(t)}\sum_{j\ge1}\lambda_{j}(t)\mathcal{X}_{j}\,.
\end{equation}
Now let us solve the CHKO equation Eq.~(\ref{eq:CHKO}). The solution
is
\begin{equation}
F(t)=U(t)F(0)U^{\dagger}(t),\label{eq:UFUdag}
\end{equation}
where
\begin{equation}
\text{i}\dot{U}(t)=H(t)U(t).
\end{equation}
Substituting Eq.~(\ref{eq:F-restricted}) into Eq.~(\ref{eq:UFUdag}),
we obtain 
\begin{equation}
\lambda_{0}(t)\left[H(t)+G(t)\right]=U(t)F(0)U^{\dagger}(t).
\end{equation}
Multiplying both sides by $U(t)$, we find 
\begin{equation}
\text{i}\dot{U}(t)+G(t)U(t)=\frac{1}{\lambda_{0}(t)}U(t)F(0).\label{eq:U-diff}
\end{equation}
Therefore, we consider the Hamiltonian. 
\begin{equation}
H(t)=\text{i}\dot{U}(t)U^{\dagger}(t)=-G(t)+\frac{1}{\lambda_{0}(t)}U(t)F(0)U^{\dagger}(t).\label{eq:H-preformal}
\end{equation}
Eq.~(\ref{eq:U-diff}) is a first-order differential equation
with initial condition $U(0)=\mathbb{I}$ and thus have unique
solution. When $M=1$, where $G(t)$ commutes at all times, the solution
to Eq.~(\ref{eq:U-diff}) is \textcolor{red}{}
\begin{equation}
U(t)=\exp\left[\text{i}\int_{0}^{t}G(\tau)d\tau\right]\exp\left[-\text{i}F(0)\int_{0}^{t}\frac{d\tau}{\lambda_{0}(\tau)}\right].
\end{equation}
For the case $M\ge2$, while $G(t)$ does not commute at all times, 
 the formal solution admits the form
\begin{equation}
U(t)=\mathcal{T}\exp\left[\text{i}\int_{0}^{t}G(\tau)d\tau\right]\exp\left[-\text{i}F(0)\int_{0}^{t}\frac{d\tau}{\lambda_{0}(\tau)}\right],\label{eq:U-formal}
\end{equation}
where $\mathcal{T}$ denotes the time-ordering operator. For the sake of simplicity,
we  denote 
\begin{equation}
V(t)\equiv\mathcal{T}\exp\left[\text{i}\int_{0}^{t}G(\tau)d\tau\right].\label{eq:V-def}
\end{equation}
and note that it satisfies Eq.~(\ref{eq:V-SE}) in the main text.
Therefore, upon substituting the expression for $F(0)$, we establish Eq.~(\ref{eq:U-expression})
in the main text. 

One can check that  Eq.~(\ref{eq:U-formal}) is indeed a solution to Eq.~(\ref{eq:U-diff}),
given that
\begin{align}
\dot{U}(t) & =\text{i}G(t)U(t)-\text{i}\frac{1}{\lambda_{0}(t)}U(t)F(0).\label{eq:U-dot}
\end{align}
Thus, Eq.~(\ref{eq:H-preformal}) becomes 
\begin{equation}
H(t)=\frac{1}{\lambda_{0}(t)}V(t)F(0)V^{\dagger}(t)-G(t).\label{eq:H-formal}
\end{equation}
Upon substituting the expression for $F(0)$ and $G(t)$, we find
Eq.~(\ref{eq:H-expression}) in the main text. 

Eq.~(\ref{eq:H-formal}) satisfies the constraints $\text{Tr}[H(t)]=0$
and by default given Eq.~(\ref{eq:F-restricted}). \textcolor{red}{}
To satisfy the norm constraint~(\ref{eq:norm-constraint}) in the
main text, we compute 
\begin{equation}
\text{Tr}[H^{2}(t)]=\text{Tr}[G^{2}(t)]+\frac{\text{Tr}[F^{2}(0)]}{\lambda_{0}^{2}(t)}-\frac{2}{\lambda_{0}(t)}\text{Tr}[G(t)F(t)]=2\omega^{2},\label{eq:TrH2-explicit}
\end{equation}
where we have used the fact that
\begin{equation}
F(t)=U(t)F(0)U^{\dagger}(t)=V(t)F(0)V^{\dagger}(t).
\end{equation}
We note that 
\begin{align}
\text{Tr}[G^{2}(t)] & =N\frac{\sum_{j\ge1}\lambda_{j}^{2}(t)}{\lambda_{0}^{2}(t)},\\
\text{Tr}[F^{2}(t)] & =2\omega^{2}\lambda_{0}^{2}(t)+N\sum_{j\ge1}\lambda_{j}^{2}(t),
\end{align}
and that $\text{Tr}[F^{2}(t)]$ is a conserved quantity, so we have 
\begin{equation}
2\omega^{2}\lambda_{0}^{2}(t)+N\sum_{j\ge1}\lambda_{j}^{2}(t)=2\omega^{2}\lambda_{0}^{2}(0)+N\sum_{j\ge1}\lambda_{j}^{2}(0).\label{eq:TrF2}
\end{equation}
Thus, Eq.~(\ref{eq:TrH2-explicit}) becomes
\begin{equation}
\text{Tr}[G(t)F(t)]=\frac{N\sum_{j\ge}\lambda_{j}^{2}(t)}{\lambda_{0}(t)}.\label{eq:TrYFt}
\end{equation}
To satisfy the constraint~(\ref{eq:term-constraint}), we have 
\begin{equation}
\text{Tr}[H(t)\mathcal{X}_{j}]=\frac{1}{\lambda_{0}(t)}\left(\text{Tr}\left[\mathcal{X}_{j}F(t)\right]-\lambda_{j}(t)N\right)=0,
\end{equation}
which leads to 
\begin{equation}
\text{Tr}\left[\mathcal{X}_{j}F(t)\right]=N\lambda_{j}(t),\,\forall j\ge1.\label{eq:TrXFt}
\end{equation}
Since Eq.~(\ref{eq:TrXFt}) implies Eq.~(\ref{eq:TrYFt}), we note
that only Eq.~(\ref{eq:TrF2}) and Eq.~(\ref{eq:TrXFt}) are independent.
Since Eq.~(\ref{eq:lam-0}) holds at the initial time, it will be
satisfied if its first derivatives on both sides are equal at all
times. This leads to the following differential equation 
\begin{equation}
2\omega^{2}\lambda_{0}(t)\dot{\lambda}_{0}(t)+N\sum_{j\ge1}\lambda_{j}(t)\dot{\lambda}_{j}(t)=0.\label{eq:lam0-dot}
\end{equation}
Similarly, Eq.~(\ref{eq:TrXFt}) is satisfied initially. So taking
time derivatives on both sides of Eq.~(\ref{eq:TrXFt}) and using
the fact that 
\[
\dot{F}(t)=-\text{i}[H(t),\,F(t)]=-\text{i}\lambda_{0}(t)[H(t),\,G(t)],
\]
 together with the identity $\text{Tr}\left(A[B,\,C]\right)=\text{Tr}\left(C[A,\,B]\right)$, 
 yields the differential equation 
\begin{equation}
\dot{\lambda}_{j}(t)=\frac{1}{N}\sum_{l\ge1}\lambda_{l}(t)\eta_{jl}(t),\label{eq:lamj-dot}
\end{equation}
where 
\begin{equation}
\eta_{jl}(t)=\text{Tr}[H(t)\mathscr{X}_{jl}].\label{eq:eta-def}
\end{equation}
 Substituting Eq.~(\ref{eq:lamj-dot}) into Eq.~(\ref{eq:lam0-dot}),
we obtain Eq.~(\ref{eq:lam-0}) in the main text. 

\section{\label{sec:Initial-form}Determine the form of $F(0)$ and $H(0)$
using Eq.~(\ref{eq:CHKO-initial}) for the two-qubit example}

According to Eq.~(\ref{eq:H0-choice}), one should choose the initial
Hamiltonian as,
\begin{align}
H(0) & =\mu_{22}\sigma_{1}^{2}\sigma_{2}^{22}+\mu_{33}\sigma_{1}^{3}\sigma_{2}^{3}\nonumber \\
 & +\mu_{23}\sigma_{1}^{2}\sigma_{2}^{3}+\mu_{32}\sigma_{1}^{3}\sigma_{2}^{2},\label{eq:H0-2qu}
\end{align}
and 
\begin{align}
F(0) & =H(0)+\lambda_{10}\sigma_{1}^{1}+\lambda_{20}\sigma_{1}^{2}+\lambda_{30}\sigma_{1}^{3}\nonumber \\
 & +\lambda_{01}\sigma_{2}^{1}+\lambda_{02}\sigma_{2}^{2}+\lambda_{03}\sigma_{2}^{3}+\lambda_{11}\sigma_{1}^{1}\sigma_{2}^{1}\nonumber \\
 & +\lambda_{12}\sigma_{1}^{1}\sigma_{2}^{2}+\lambda_{13}\sigma_{1}^{1}\sigma_{2}^{3}+\lambda_{21}\sigma_{1}^{2}\sigma_{2}^{1}+\lambda_{31}\sigma_{1}^{3}\sigma_{2}^{1},\label{eq:F0-2qu}
\end{align}
where we have set $\lambda_{0}=1$. We consider $\ket{e_{1}}=\ket{\psi_{i}}=\ket{11}$
and $\ket{e_{2}}=\ket{\psi_{f}^{\perp}}=\ket{00}$. Note that $\braket{e_{k}\big|\sigma_{1}^{\alpha}\sigma_{2}^{\beta}\big|e_{k}}=0$
as long as $\alpha,\,\beta$ are both not equal to $3$. 
Therefore, we find that $\braket{e_{k}\big|\mathcal{X}_{j}\big|e_{k}}=0$
except for $\mathcal{X}_{j}=\sigma_{1}^{3}$ or $\sigma_{2}^{3}$. Using
$\braket{e_{1}\big|F(0)\big|e_{1}}=0$ leads to 
\begin{equation}
\mu_{33}-\lambda_{30}-\lambda_{03}=0.
\end{equation}
Similarly, for $k\ge2$,  $\braket{e_{k}\big|F(0)\big|e_{k}}=0$ lead
to
\begin{align}
\mu_{33}+\lambda_{30}+\lambda_{03} & =0,\\
\mu_{33}-\lambda_{30}+\lambda_{03} & =0,\\
\mu_{33}+\lambda_{30}-\lambda_{03} & =0,
\end{align}
from which we find Eq.~(\ref{eq:2qu-coeff1}) in the main text. 

One can also readily find that
\begin{equation}
\begin{array}{ccc}
\braket{00\big|\sigma_{1}^{\alpha}\sigma_{2}^{\beta}\big|10}=0,\,\beta\neq3,\, & \braket{00\big|\sigma_{1}^{1}\sigma_{2}^{3}\big|10}=1,\, & \braket{00\big|\sigma_{1}^{2}\sigma_{2}^{3}\big|10}=-\text{i},\end{array}\label{eq:id1}
\end{equation}
which leads to $\braket{e_{2}\big|F(0)\big|e_{3}}=-\text{i}(\mu_{23}+\lambda_{20})+\lambda_{13}+\lambda_{10}=0$.
On the other hand, we note $\mu_{\alpha\beta}$ and $\lambda_{\gamma\delta}$
must be real, so we obtain Eq.~(\ref{eq:2qu-coeff2}) in the main
text. 

Similarly, we observe 
\begin{equation}
\begin{array}{ccc}
\braket{00\big|\sigma_{1}^{\alpha}\sigma_{2}^{\beta}\big|01}=0,\,\beta\neq3,\, & \braket{00\big|\sigma_{1}^{3}\sigma_{2}^{1}\big|01}=1,\, & \braket{00\big|\sigma_{1}^{3}\sigma_{2}^{2}\big|01}=\text{i},\end{array}\label{eq:id2}
\end{equation}
which leads to Eq.~(\ref{eq:2qu-coeff4}) in the main text. At this
point, we find 
\begin{equation}
H(0)=\mu_{22}\sigma_{1}^{2}\sigma_{2}^{2}+\mu_{23}\sigma_{1}^{2}\sigma_{2}^{3}+\mu_{32}\sigma_{1}^{3}\sigma_{2}^{2}
\end{equation}
and 
\begin{align}
F(0) & =\mu_{22}\sigma_{1}^{2}\sigma_{2}^{2}++\mu_{23}\sigma_{1}^{2}\sigma_{2}^{3}+\mu_{32}\sigma_{1}^{3}\sigma_{2}^{2}\nonumber \\
 & +\lambda_{10}\sigma_{1}^{1}+\lambda_{20}\sigma_{1}^{2}+\lambda_{01}\sigma_{2}^{1}+\lambda_{02}\sigma_{2}^{2}\nonumber \\
 & +\lambda_{10}(\sigma_{1}^{1}-\sigma_{1}^{1}\sigma_{2}^{3})+\lambda_{01}(\sigma_{2}^{1}-\sigma_{1}^{3}\sigma_{2}^{1})\nonumber \\
 & +\lambda_{11}\sigma_{1}^{1}\sigma_{2}^{1}+\lambda_{12}\sigma_{1}^{1}\sigma_{2}^{2}+\lambda_{21}\sigma_{1}^{2}\sigma_{2}^{1}.
\end{align}
Finally, we note that
\begin{equation}
\begin{array}{cccc}
\braket{10\big|\sigma_{1}^{2}\sigma_{2}^{2}\big|01}=1,\, & \braket{10\big|\sigma_{1}^{1}\sigma_{2}^{2}\big|01}=1,\, & \braket{10\big|\sigma_{1}^{1}\sigma_{2}^{2}\big|01}=-\text{i},\, & \braket{10\big|\sigma_{1}^{2}\sigma_{2}^{1}\big|01}=\text{i},\end{array}\label{eq:id3}
\end{equation}
which leads to Eq.~(\ref{eq:2qu-coeff6}) in the main text. 

\end{widetext}

\bibliographystyle{apsrev4-1}
\bibliography{QBE}

\end{document}